\newtheorem{theorem}{Theorem}
\newtheorem{lemma}{Lemma}
\newtheorem{corollary}{Corollary}
\newtheorem{definition}{Definition}
\newcommand{\ketbra}[2]{|#1\rangle\langle #2|}
\newcommand{\id}{\mathbbm{1}} 
\newcommand{\beq}{\begin{equation}}
\newcommand{\enq}{\end{equation}}
\newcommand{\beqst}{\begin{equation*}}
\newcommand{\enqst}{\end{equation*}}
\newcommand{\beqar}{\begin{eqnarray}}
\newcommand{\enqar}{\end{eqnarray}}
\newcommand{\beqarst}{\begin{eqnarray*}}
\newcommand{\enqarst}{\end{eqnarray*}}
\newcommand{\beit}{\begin{itemize}}
\newcommand{\enit}{\end{itemize}}
\newcommand*{\cL}{\mathcal{L}}
\newcommand*{\cB}{\mathcal{B}}
\mathchardef\mhyphen="2D
\newcommand{\suppress}[1]{}
\newcommand {\br} [1] {\ensuremath{ \left( #1 \right) }}
\newcommand {\entz} [1] {\mathrm{S}_{0}(#1)}
\newcommand{\sr}[1]{\mathrm{SR}\br{#1}}
\definecolor{henrik}{rgb}{1,.4,0}
\definecolor{purple}{rgb}{1,0,1}
\newcommand{\mc}[1]{\mathcal{#1}}
\newcommand{\e}{\mathrm{e}}
\newcommand{\ii}{\mathrm{i}}
\newcommand{\Tr}{\mathrm{Tr}} 
\newcommand{\supp}{\mathrm{supp}}
\newcommand{\poly}{\mathrm{poly}}
\newcommand{\norm}[1]{\left\Vert #1 \right\Vert}
\newcommand{\braketn}[2]{\langle #1 | #2 \rangle}
\newcommand{\proj}[1]{\ketbra{#1}{#1}}
\newcommand{\LR}{\mathrm{LR}}
\begin{document}

\title{Revivals imply quantum many-body scars}
\date{\today}

\author{Alvaro M. Alhambra}
\affiliation{Perimeter Institute for Theoretical Physics, Waterloo, ON N2L 2Y5, Canada}
\author{Anurag Anshu}
\affiliation{Perimeter Institute for Theoretical Physics, Waterloo, ON N2L 2Y5, Canada}
\affiliation{Institute for Quantum Computing and Department of Combinatorics and Optimization, University of Waterloo, Waterloo, ON N2L 3G1, Canada}

\author{Henrik Wilming}
\affiliation{Institute  for  Theoretical  Physics,  ETH  Zurich,  8093  Zurich,  Switzerland}
\begin{abstract}
	We derive general rigorous results relating revivals in the dynamics of quantum many-body systems to the entanglement properties of energy eigenstates. For a D-dimensional lattice system of N sites initialized in a low-entangled and short-range correlated state, our results show that a perfect revival of the state after a time at most $O(\poly(N))$ implies the existence of at least $O(\sqrt{N}/\log^{2D}(N))$ ``quantum many-body scars": energy eigenstates with energies placed in an equally-spaced ladder and with R\'enyi entanglement entropy at most $O(\log(N)) + O(|\partial A|)$ for any region $A$ of the lattice. This shows that quantum many-body scars are a necessary consequence of revivals, independent of particularities of the Hamiltonian leading to them.  
We also present results for approximate revivals, for revivals of expectation values of observables and prove that the duration of revivals of states has to become vanishingly short with increasing system size.
\end{abstract}
\maketitle
The behaviour of out of equilibrium quantum many-body systems has been gathering a large amount of attention in recent years.
This has largely been motivated by the recent progress of experimental platforms such as cold atoms, ion traps or Rydberg atoms, where many of these systems can be realized in practice \cite{blatt2012quantum,gross2017quantum,bernien2017probing}.
One of the most widely studied situations in this context is that of ``quantum quenches": The system is first prepared in an initial pure state, to then be subjected to an instantaneous change of Hamiltonian $H_0 \rightarrow H$ that drives it out of equilibrium. In generic cases, it is believed that the dynamics will relax to an equilibrium state locally indistinguishable from a thermal ensemble, as granted by the Eigenstate Thermalization Hypothesis (ETH) \cite{Srednicki1994,Srednicki99}. Both the ETH and this relaxing behaviour have been confirmed in numerous numerical and experimental works \cite{d2016quantum,gogolin2016equilibration}. However, there are various cases where this prediction fails notoriously. They include integrable systems that relax to a so-called generalized Gibbs ensemble \cite{vidmar2016generalized}, and also many-body localized systems \cite{nandkishore2015many}, characterized by the presence of quasi-local integrals of motion \cite{imbrie2017local} which prevent the system from thermalizing due to memory of the initial conditions. 

Recently, a new kind of deviation from the predictions of the ETH has been found. It consists of systems which, rather than relaxing, actually revive back to the initial state after a short time. This phenomenon was first found in the experiment of Ref.~\cite{bernien2017probing}, which showed that a system of 51 Rydberg atoms did not thermalize as expected when prepared in a particular initial product state. Shortly after, this was associated with the presence of a number of anomalous energy eigenstates in the spectrum \cite{Turner2018}, the so called \emph{quantum many-body scars}. The first class of models displaying such anomalous eigenstates had been constructed in \cite{shiraishi2017systematic}, and since then, numerous recent efforts have aimed to characterize these eigenstates \cite{ho2019periodic,lin2019exact,khemani2019signatures,iadecola2019quantum,shiraishi2019connection,surace2019lattice,lin2019slow}. Since their discovery, they have been found in further classes of models, see for example Refs.~\cite{vafek2017entanglement,Turner2018a,moudgalya2018entanglement,ok2019topological,bull2019systematic,james2019nonthermal,moudgalya2019quantum,iadecola2019exact,hudomal2019quantum}, (including driven ones \cite{pai2019dynamical,mukherjee2019collapse,haldar2019scars}), some of which even display perfect revivals when the system is prepared in particular product states \cite{choi2019emergent,schecter2019weak} or matrix product states (MPS) \cite{Chattopadhyay2019,Iadecola2019kinetic}. While it is clear that in any model exhibiting scarred eigenstates there are relatively low-entangled initial states that show perfect revivals (simply take a super-position of two scarred eigenstates), it is not expected that one can always find short-range correlated states (e.g., product states) that show perfect revivals.

Motivated by these recent findings, we here derive a number of analytical results that apply to many-body systems exhibiting revivals at short times from low-entangled and short-range correlated states. Our results significantly improve on a Lemma presented in Ref.~\cite{choi2019emergent}.  We first derive properties of the energy spectrum and eigenstates that have to be fulfilled whenever (approximate) revivals appear in a local quantum many-body system, independent of the details of the Hamiltonian and in any dimension $D$ of the underlying lattice. We show that the existence of at least $\mc O(\sqrt{N}/\log^{2D}(N))$ (where $N$ is the system size) quantum many-body scars follows from the early revivals of low-entangled and short-range correlated initial states, when the revival time $\tau$ is at most of the order of $\poly(N)$. 
We prove that all of these quantum many-body scars have R\'enyi entanglement entropies (of orders $\alpha>1$) of at most $\mc O(\log(N))+O(|\partial A|)$, for any subset $A$ of the lattice sites, with the area law term vanishing if the initial state experiencing revivals is a product state. Our bounds hence match the scaling that has been found in concrete model Hamiltonians \cite{Turner2018a,vafek2017entanglement,moudgalya2018entanglement,choi2019emergent,schecter2019weak,Chattopadhyay2019,Iadecola2019kinetic}. In dimension $D=2$ or higher and for initial product states, our results show that quantum-many-body scars show even weaker entanglement in terms of R\'enyi entropies of order $\alpha>1$ than allowed by an area law (with $\log$ corrections). For R\'enyi entropies with $\alpha \le 1$, we use techniques inspired by the problem of bounding entanglement of ground states of gapped models \cite{arad2013area} to show that the entanglement entropy scales at most $\mathcal{O}(\sqrt{N | \partial A |})$.

The paper is structured as follows: In Section \ref{sec:revival} we state our assumptions on the initial states and define the notion of exact and approximate revivals. Then, in Section \ref{sec:constraints}, we give constraints on the energy distribution of initial states with revivals, which we use to give bounds on the R\'enyi entanglement entropy $S_\alpha$, first in Section \ref{sec:entanglement} (for $\alpha>1$) and then in Section \ref{sec:entanglement2} ($\alpha \le 1$). In Section \ref{sec:observable} we explain the consequences of perfect revivals on an observable, and in Section \ref{sec:universal} we give universal constraints that all periodic revivals must obey. In the Appendix we include the proof of some of the technical statements and discuss a model example to benchmark our bounds.

\section{Systems with revivals}\label{sec:revival}
We consider a system on a regular $D$-dimensional lattice $\Lambda$ of $N$ $d$-dimensional sites with a local Hamiltonian $H=\sum_{x\in \Lambda} h_x$, where the local terms $h_x$ have support on at most $b$ neighboring sites, and are uniformly bounded by a constant $h$: $\norm{h_x}\leq h$. As usual, we denote the unitary implementing time evolution by $U_t=\exp(-\ii H t)$ and the energy eigenstates by $\ket{E_j}$. Without loss of generality, we assume that the ground state energy vanishes, $E_0=0$, and set $\hbar=1$. 
We further assume that the system is prepared in a pure state $\ket{\Psi}$ which:
\begin{enumerate}[i)]
	\item \label{ass:entanglement} Is a \emph{low-entangled} state: For every region $A$ on the lattice, the reduced density matrix $\sigma_A$ has rank at most $\chi^{|\partial A|}$, where $|\partial A|$ is the area of the boundary of $A$ and $\chi$ is independent of the system size.
	\item \label{ass:clustering} Is \emph{short-range correlated} and out of equilibrium: It fulfills exponential decay of correlation with a finite correlation length and the standard deviation of the energy is given by $\sigma \equiv \sqrt{\langle H^2 \rangle - \langle H\rangle^2} = s \sqrt{N}$ for some constant $s>0$.
\end{enumerate}
The statement of assumption \ref{ass:entanglement} is somewhat technical, but it includes all states that can be represented by a tensor network with constant bond dimension, such as projected entangled-pair states (PEPS) in D=2 and matrix product states (MPS) in D=1. 
The constant $\chi$ is then directly related to the bond dimension. In particular, for product states we have $\chi=1$. 
The upper bound $\sigma\leq s\sqrt{N}$ required in assumption \ref{ass:clustering} follows directly from the finite correlation length. The assumption therefore simply makes explicit that the initial state must not be an eigenstate of the Hamiltonian. We emphasize that generic tensor network states also have a finite correlation length \cite{Lancien2019}.

The way in which we understand revivals of a state is in terms of the fidelity with the initial state, captured by the following definition.
\begin{definition}
	An initial state $\ket{\Psi}=\sum_j c_j \ket{E_j}$ evolved with a Hamiltonian $H$ has an $\epsilon$-revival at time $\tau$ if 
	\begin{equation}
	|F(\tau) - F(0)|\leq \epsilon,
	\end{equation}
	where $F(t) = |f(t)|$ with $f(t)=\bra{\Psi} e^{-\ii t H} \ket{\Psi}=\sum_i \vert c_j\vert^2 e^{-\ii t E_j}$.
\end{definition}	
The definition only involves an $\epsilon$-revival at a single time $\tau$. However, it implies that there are further periodic approximate revivals at later times. Concretely, an $\epsilon$-revival at time $\tau$ implies (see appendix~\ref{sec:revivals} for derivation)
\begin{align}\label{eq:revivals}
    F(m\tau) \geq 1 - m\sqrt{2\epsilon},\quad m\in\mathbb N.
\end{align}
We emphasize that the revival of the full many-body state is a very strong condition and $f(t)$, sometimes known as ``spectral form factor" and its absolute $F(t)$ value as ``survival probability", is not a directly measurable quantity ($F(t)$ is, however, measurable in principle using an interferometric Ramsey scheme \cite{de2008ramsey,goold2011orthogonality,knap2012time,cetina2016ultrafast}).
For this reason,  we also consider the case of a perfectly recurring expectation value of an observable $A$, leading to similar results under an additional assumption (see Section~\ref{sec:observable}).

\section{Constraints on the energy distribution}\label{sec:constraints}
\begin{figure}[t]
	\centering
	\includegraphics[width = 0.99\linewidth]{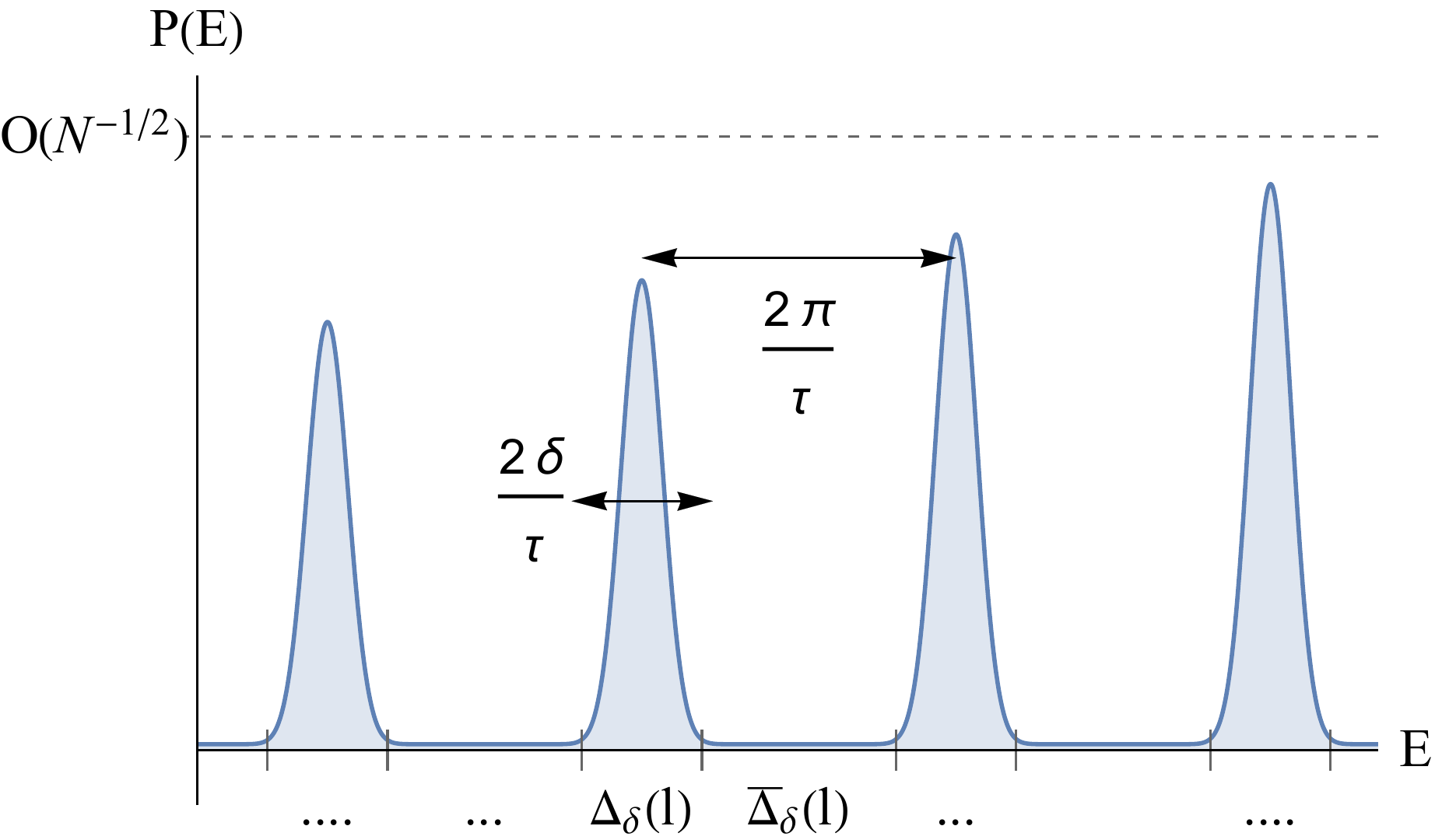}
	\caption{
	Schematic diagram of the energy distribution and of the intervals $\Delta_\delta(l)$, with equally-spaced peaks of width $2\delta /\tau$. In the limit of perfect revivals, the peaks have width $0$. The upper bound follows from the Berry-Esseen theorem (Theorem \ref{th:berry} in the Appendix). Theorem \ref{th:nc} guarantees that at least $\mathcal{O}(\sqrt{N}/\log^D(N))$ peaks have a weight larger than $\mathcal{O}(1/\text{poly}(N) )$.} 
	\label{fig:peaks}
\end{figure}
From the definition $f(t) = \bra{\Psi}\exp(-\ii H t)\ket{\Psi}$, it is clear that $f(t)$ is the characteristic function of the probability distribution of energy. 
In this section, we therefore study the properties of the probability distribution of energy in the case of $\epsilon$-revivals of a state that fulfils assumption \ref{ass:clustering}  above.
First we show that if there are approximate revivals at short times $\tau$, a large weight of the distribution is contained within equally-spaced ``peaks", whose spacing depends on $\tau$ (see Fig.~\ref{fig:peaks}). This is true for any initial state.
We then make use of the fact that the probability distribution of energy of a state with a finite correlation length is roughly Gaussian, with which we show that at least $\sim \sqrt{N}$ of the peaks each contain total weight of at least $\mc O(1/N)$. 

As before, we write the initial state in the energy eigenbasis as $$\ket\Psi = \sum_j c_j \ket{E_j}.$$ In the case where the Hamiltonian has degenerate energy-levels, we choose the basis in each energy eigenspace so that every energy appears only once in the above decomposition of $\ket\Psi$.
To set up some further notation, let us introduce $\alpha(t)$ as the phase of $f(t)$:
\begin{align}
	f(t) = \e^{\ii \alpha(t)}F(t),\quad \alpha(0)=0. 
\end{align}
Given $\alpha(\tau)$ and an arbitrary constant $0\leq \delta\leq\pi$, we define for all $l\in\mathbb Z$ the energy intervals 
\begin{align}
	\Delta_\delta(l) = \frac{2\pi l+ \alpha(\tau)}{\tau} + \left[-\frac{\delta}{\tau}, \frac{\delta}{\tau}\right] ,
\end{align}
where addition is point-wise, and the interval between two consecutive such intervals
\begin{align}
	\overline{\Delta}_\delta(l) = \left(\frac{2\pi l+\alpha(\tau)}{\tau} + \frac{\delta}{\tau},\frac{2\pi (l+1)+\alpha(\tau)}{\tau} - \frac{\delta}{\tau} \right).
\end{align}
These partition the real line as
$\mathbb R = \bigcup_{l\in\mathbb Z} \Delta_\delta(l)\cup \overline{\Delta}_\delta(l)$.
Note that since $\vert \vert H \vert \vert \leq  h N$, the number of intervals $\Delta_\delta(l)$ in the spectrum with nonzero energy eigenvalues is at most
\begin{align}
n \equiv \frac{\tau h}{2\pi} N\propto \tau N.
\end{align}
To de-clutter the notation in what follows, let us also introduce $p$ as the probability measure of energy of the initial state, so that 
\begin{align}
	p(\Delta_\delta(l)) = \sum_{i:E_i \in \Delta_\delta(l)} |c_i|^2. 
\end{align}

The following lemma lower bounds the probability of measuring an energy on the initial state within one of the intervals $\Delta_\delta(l)$.
\begin{lemma}\label{le:recurrence}
	Let $\epsilon \ge |F(\tau) - F(0)|$. Then
\begin{align}\label{eq:bound1}
\sum_{l\in \mathbb Z} p(\Delta_\delta(l))\geq 	1 -\frac{\epsilon}{1-\cos(\delta)}.
\end{align}
\end{lemma}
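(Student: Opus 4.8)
The plan is to convert the revival hypothesis into a statement about phase alignment of the eigen-energies, and then read off the weight in the peaks by a one-line splitting argument. First I would record the two trivial facts $F(0)=f(0)=\sum_j|c_j|^2=1$ and $F(t)=|f(t)|=\big|\sum_j|c_j|^2\e^{-\ii tE_j}\big|\le\sum_j|c_j|^2=1$. Since $F(\tau)\le1$, the hypothesis $\epsilon\ge|F(\tau)-F(0)|=1-F(\tau)$ collapses to the one-sided bound
\begin{align}
F(\tau)\ge 1-\epsilon .
\end{align}

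Next I would produce a real, cosine-weighted expression for $F(\tau)$. By the definition of the phase, $\e^{-\ii\alpha(\tau)}f(\tau)=F(\tau)$ is real and non-negative, so taking the real part of $\e^{-\ii\alpha(\tau)}f(\tau)=\sum_j|c_j|^2\,\e^{-\ii(\tau E_j+\alpha(\tau))}$ yields
\begin{align}
F(\tau)=\sum_j|c_j|^2\cos\!\big(\tau E_j+\alpha(\tau)\big).
\end{align}
The intervals $\Delta_\delta(l)$ are constructed precisely so that an eigenvalue $E_i$ lies in $\bigcup_l\Delta_\delta(l)$ exactly when its phase points within angle $\delta$ of the revival direction $\alpha(\tau)$ modulo $2\pi$, i.e.\ when the corresponding cosine is at least $\cos\delta$; conversely, for $E_i$ in one of the gaps $\overline{\Delta}_\delta(l)$ the phase lands in an arc of the form $(\delta,2\pi-\delta)$ around a multiple of $2\pi$, and because $0\le\delta\le\pi$ the monotonicity of $\cos$ on $[0,\pi]$ forces the cosine there to be at most $\cos\delta$.

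I would then split the sum according to whether $E_j$ sits inside a peak or in a gap, bounding each summand by $1$ on the peaks and by $\cos\delta$ on the gaps. Writing $P=\sum_l p(\Delta_\delta(l))$ for the total peak weight, so that the gap weight is $1-P$, this gives
\begin{align}
1-\epsilon\le F(\tau)\le P+(1-P)\cos\delta=\cos\delta+P\,(1-\cos\delta).
\end{align}
Solving for $P$ and using $1-\cos\delta>0$ (valid for $0<\delta\le\pi$) produces $P\ge 1-\epsilon/(1-\cos\delta)$, which is exactly \eqref{eq:bound1}.

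The only genuinely delicate point is the geometric bookkeeping in the middle step: one must check that membership of $E_i$ in $\Delta_\delta(l)$ versus $\overline{\Delta}_\delta(l)$ translates cleanly into the two-sided cosine bound, which hinges on the restriction $\delta\le\pi$ so that the ``good'' arc on which $\cos\ge\cos\delta$ coincides with the union of peaks and creates no spurious overlap with the gaps. Everything else—the reduction to $F(\tau)\ge1-\epsilon$, the real-part identity, and the final algebraic rearrangement—is routine and requires no assumption on the initial state, consistent with the remark that the lemma holds for arbitrary $\ket{\Psi}$.
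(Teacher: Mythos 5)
Your proposal is correct and follows essentially the same route as the paper: both arguments come down to the identity $1-F(\tau)=\sum_j|c_j|^2\bigl(1-\cos(E_j\tau+\alpha(\tau))\bigr)$ (the paper reaches it via $|z|\ge\mathrm{Re}\,z$, you via the exact real-part identity for the phase-aligned $f(\tau)$, a purely cosmetic difference), followed by the same peak/gap splitting with $\cos\le\cos\delta$ on the gaps and the same final rearrangement. You even inherit the paper's own sign convention in writing $\cos(E_j\tau+\alpha(\tau))$ against intervals centered at $(2\pi l+\alpha(\tau))/\tau$, so there is nothing to add.
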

\begin{proof}
The proof follows from simple applications of inequalities between complex numbers.	Since $f(0)=1$, we have
	\begin{align}
		|F(\tau)-F(0)| &= \left|\e^{-\ii \alpha(\tau)} f(\tau) -f(0)\right| \\
		&=\left| \sum_j |c_j|^2 \left(1- \e^{-\ii \alpha(\tau) - \ii E_j\tau}\right)\right|.
	\end{align}
	Since for any complex number we have $|z| = \sqrt{\mathrm{Re} z^2 + \mathrm{Im} z^2} \geq |\mathrm{Re}z|$, we get the lower bound
\begin{align}
	|F(\tau)-F(0)|\geq \sum_j |c_j^2|(1-\cos(E_j \tau+\alpha(\tau))).
\end{align}
	We now split up the summation in terms of the intervals $\Delta_\delta(l)$ and $\overline{\Delta}_\delta(l)$ and neglect the contributions from $\Delta_\delta(l)$. This yields a lower bound
\begin{align}
	|F(\tau)-F(0)| &\geq 
	\sum_{l\in \mathbb Z} \sum_{j: E_j \in \overline{\Delta}_\delta(l)}|c_j|^2(1-\cos(E_j\tau+\alpha(\tau))).
\end{align}
	For $E_j \in \overline{\Delta}_\delta(l)$  we have that
\begin{align}
	\cos(E_j \tau+\alpha(\tau)) \leq \cos(\delta). 
\end{align}
We hence obtain
\begin{align}
	&\epsilon \geq |F(\tau) - F(0)| \geq (1-\cos(\delta))\sum_{l\in \mathbb Z}p\left(\overline{\Delta}_\delta(l)\right)\nonumber
	\\& \Rightarrow\quad \sum_{l\in \mathbb Z}p\left(\overline{\Delta}_\delta(l)\right)\leq \frac{\epsilon}{1-\cos(\delta)}.
\end{align}
Using the normalization of the probability distribution of energy, we then find 
\begin{align}
	1 -\frac{\epsilon}{1-\cos(\delta)}\leq  \sum_{l\in \mathbb Z} p\left(\Delta_\delta(l)\right). 
\end{align}
\end{proof}

Lemma \ref{le:recurrence} tells us that if an $\epsilon$-revival at time $\tau$ occurs, the energy distribution must be mostly contained in the intervals $\Delta_\delta(l)$ as long as $\cos(\delta)$ is not too close to unity. 
The smaller $\epsilon$ (which is equivalent to an increasingly exact revival), the narrower the intervals $\Delta_\delta(l)$ can be made, by choosing a $\delta$ such that the RHS of \eqref{eq:bound1} is close to $1$.
If the recurrence time $\tau$ is very large, both the distance between the intervals $\Delta_\delta(l)$ and their width $2\delta/\tau$ is small. 
In a finite system, for every $\epsilon>0$, recurrence theorems guarantee \cite{bocchieri1957quantum,wallace2013recurrence} the existence of a corresponding recurrence time $\tau_{R}$. For generic systems, however, one expects that$\tau_R = \mc O(\exp(\exp(N)))$, while for particular cases such as integrable systems, it is expected that $\tau_R = \mc O(\exp(N))$ \cite{venuti2015recurrence}.
In either case, the distance between the intervals $\Delta_\delta(l)$ becomes comparable to or smaller than the level-spacing, so that the union of the $\Delta_\delta(l)$ automatically contains (almost) all energy eigenvalues.  

The next important feature of energy distributions of local models in a state with finite-correlation length is given by the Berry-Esseen theorem \cite{brandao2015equivalence}. This is a strengthening of the central limit theorem, in which the error from having finite sample sizes is bounded by a function of the number of samples. It allows us to derive the second key constraint.
\begin{lemma}\label{le:upperbound} Let $\ket{\Psi}$ be a state fulfilling assumption~\ref{ass:clustering}. Then there exists a constant $K\geq 0$ (independent of $N$) such that
\begin{align}
	p(\Delta_\delta(l)) \leq \frac{\delta}{\sigma\tau} + K \frac{\log^{2D}(N)}{\sqrt{N}}.
\end{align}
\end{lemma}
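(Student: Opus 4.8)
The plan is to exploit the fact that, by assumption~\ref{ass:clustering}, the energy distribution $p$ is close to a Gaussian at the level of cumulative distribution functions, and then to bound the weight of a narrow interval by the maximal density of that Gaussian. Concretely, I would invoke the Berry-Esseen theorem for local Hamiltonians evaluated in short-range correlated states (Theorem~\ref{th:berry} in the Appendix, following \cite{brandao2015equivalence}), which controls the cumulative distribution function $P(x) = \sum_{j:E_j\leq x}|c_j|^2$ in terms of the Gaussian CDF $G(x) = \Phi\!\left(\frac{x-\langle H\rangle}{\sigma}\right)$ with the same mean $\langle H\rangle$ and variance $\sigma^2 = s^2 N$, where $\Phi$ denotes the standard normal CDF. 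The essential input is the uniform estimate
\begin{align}
\sup_x \left| P(x) - G(x)\right| \leq K' \frac{\log^{2D}(N)}{\sqrt N},
\end{align}
for some constant $K'$ independent of $N$.

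Granted this, the argument is a short reduction. First I would write the weight of the interval $\Delta_\delta(l) = [E_-, E_+]$, which has width $E_+ - E_- = 2\delta/\tau$, as a difference of CDF values and then insert and subtract the Gaussian CDF:
\begin{align}
p(\Delta_\delta(l)) = \big[G(E_+) - G(E_-)\big] + \big[P(E_+)-G(E_+)\big] - \big[P(E_-)-G(E_-)\big].
\end{align}
For the Gaussian part I would bound the increment by the interval width times the maximal value $\tfrac{1}{\sqrt{2\pi}\,\sigma}$ of the Gaussian density, giving
\begin{align}
G(E_+) - G(E_-) \leq \frac{2\delta}{\tau}\cdot\frac{1}{\sqrt{2\pi}\,\sigma} = \sqrt{\tfrac{2}{\pi}}\,\frac{\delta}{\sigma\tau} \leq \frac{\delta}{\sigma\tau},
\end{align}
where the last inequality uses $\sqrt{2/\pi}<1$. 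The two remaining brackets are each controlled by the Berry-Esseen bound, so their difference is at most $2K'\log^{2D}(N)/\sqrt N$. Setting $K = 2K'$ and adding the two contributions yields the claimed inequality.

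I expect the genuinely hard part to lie not in this reduction, which is elementary, but in the Berry-Esseen estimate itself. Establishing that the energy CDF is Gaussian up to an error scaling as $\log^{2D}(N)/\sqrt N$ requires using the finite correlation length of $\ket{\Psi}$ to tame the weak dependencies between the local terms $h_x$; the power $2D$ of the logarithm reflects how the lattice geometry and the decay of correlations enter the clustering estimates, and it is precisely this dependence that propagates into the scaling of the main results. Since that theorem is proven separately in the Appendix, here I would simply cite it and carry out the clean interval-to-density reduction above.
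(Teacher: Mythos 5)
Your proposal is correct and follows essentially the same route as the paper's proof: both write $p(\Delta_\delta(l))$ as a CDF increment, apply the Berry--Esseen bound of Theorem~\ref{th:berry} twice (absorbing the factor $2/s^3$ into $K$), and bound the Gaussian increment over a window of width $2\delta/\tau$ by its peak density, with your $\sqrt{2/\pi}\,\delta/(\sigma\tau)$ estimate being marginally tighter than the paper's centered-interval Erf argument before both are relaxed to $\delta/(\sigma\tau)$.
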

The proof can be found in Appendix \ref{app:upperbound}.
Lemma~\ref{le:recurrence} and Lemma~\ref{le:upperbound} have competing effects: While Lemma \ref{le:recurrence} shows that the distribution clusters around at most $n$ evenly spaced energy intervals, Lemma~\ref{le:upperbound} guarantees that no particular interval of energy width $\delta$ can contain a weight larger than $\delta/(\tau\sigma) + K \log^{2D}(N)/\sqrt{N}$. Together, they imply the existence of a large number of intervals containing each a certain minimum weight:
\begin{theorem}\label{th:nc}
	Given an initial state fulfilling assumption~\ref{ass:clustering} and with an $\epsilon$-revival at time $\tau$, then for every $c>1$ and $0< \delta\leq \pi$, the number $N_{c,\delta}$ of intervals $\Delta_\delta(l)$ in the energy distribution with $p(\Delta_\delta(l))> 1/(cN)$ is lower bounded as
	\begin{align}
	N_{c,\delta} \geq  \sqrt{N}\frac{\left[1 - \frac{h \tau}{2\pi c} - \frac{\epsilon}{1-\cos(\delta)}\right]}{\delta/(\tau s) +  K \log^{2D}(N)}.
	\end{align}
\end{theorem}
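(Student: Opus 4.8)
The plan is to combine the two preceding lemmas through a simple counting argument. Lemma~\ref{le:recurrence} guarantees that a total weight of at least $1-\epsilon/(1-\cos(\delta))$ of the energy distribution is concentrated in the union of the equally-spaced intervals $\Delta_\delta(l)$, while Lemma~\ref{le:upperbound} caps the weight that any single such interval can carry. If I can additionally control the \emph{number} of intervals that carry only a small weight, then subtracting their contribution from the lower bound of Lemma~\ref{le:recurrence} will leave a substantial weight that must be distributed among the ``heavy'' intervals, each of bounded capacity, forcing there to be many of them.

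Concretely, I would first split the occupied intervals into the heavy ones, with $p(\Delta_\delta(l))>1/(cN)$, of which there are $N_{c,\delta}$ by definition, and the light ones, with $p(\Delta_\delta(l))\le 1/(cN)$. The crucial input here is the a priori bound on the total number of intervals intersecting the spectrum: since $\norm{H}\le hN$, at most $n=\tau h N/(2\pi)$ intervals $\Delta_\delta(l)$ can contain nonzero eigenvalues. Hence the light intervals, each carrying at most $1/(cN)$, contribute a combined weight of at most $n/(cN)=h\tau/(2\pi c)$. Subtracting this from the lower bound of Lemma~\ref{le:recurrence} shows that the heavy intervals together must carry weight at least
\begin{align}
1-\frac{h\tau}{2\pi c}-\frac{\epsilon}{1-\cos(\delta)}.
\end{align}

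To close the argument I would upper bound this same heavy weight by $N_{c,\delta}$ times the per-interval capacity from Lemma~\ref{le:upperbound}, i.e. by $N_{c,\delta}\,\big(\delta/(\sigma\tau)+K\log^{2D}(N)/\sqrt N\big)$. Comparing the two estimates and solving for $N_{c,\delta}$ gives the claimed inequality once I substitute the finite-correlation-length value $\sigma=s\sqrt N$ from assumption~\ref{ass:clustering}, which lets me factor $1/\sqrt N$ out of the denominator and exhibit the overall $\sqrt N$ scaling.

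I do not expect a genuine obstacle here, since all the analytic work is already carried out in Lemmas~\ref{le:recurrence} and~\ref{le:upperbound}. The only points requiring care are bookkeeping ones: using the norm bound $\norm{H}\le hN$ to guarantee finitely many occupied intervals, so that the light-interval weight is truly controlled, and keeping track of the numerator so that it remains positive. This is exactly the regime $\tau=\mc O(\poly(N))$ with $c>1$ and $\delta$ chosen so that $\epsilon/(1-\cos(\delta))$ is small; there the numerator is bounded away from zero and the bound yields the advertised $\mc O(\sqrt N/\log^{2D}(N))$ scars.
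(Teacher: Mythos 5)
Your proposal is correct and follows essentially the same route as the paper's own proof: bounding the light intervals' total weight by $n/(cN)=h\tau/(2\pi c)$ via the norm bound $\norm{H}\le hN$, subtracting this from the Lemma~\ref{le:recurrence} lower bound, capping each heavy interval by Lemma~\ref{le:upperbound}, and substituting $\sigma=s\sqrt{N}$ before solving for $N_{c,\delta}$. No gaps; the bookkeeping points you flag are exactly the ones the paper handles.
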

\begin{proof}
	The total number of peaks $\Delta_\delta(l)$ is upper bounded by $n= h N \tau/2\pi $. Hence the total number of peaks such that $p(\Delta_\delta(l))\leq 1/(cN)$ is trivially also upper bounded by $n$. Let the index set $J_\delta$ collect the peaks such that $p(\Delta_\delta(l))> 1/(cN)$. Then using Lemma~\ref{le:recurrence} we find
\begin{align}
	1 - \frac{\epsilon}{1-\cos(\delta)} &\leq \sum_{l\notin J_\delta}p(\Delta_\delta(l))+ \sum_{l\in J_\delta}p(\Delta_\delta(l)) \\
&\leq n\frac{1}{cN} + \sum_{l\in J_\delta}p(\Delta_\delta(l)) \\
	&= \frac{h\tau}{2\pi c} + \sum_{l\in J_\delta}p(\Delta_\delta(l)).
\end{align}
	Using Lemma~\ref{le:upperbound} we then get
\begin{align}
	1 - \frac{\epsilon}{1-\cos(\delta)} \leq \frac{h\tau}{2\pi c}+\frac{N_{c,\delta}}{\sqrt{N}}\left(\frac{\delta}{s\tau}  + K \log^{2D}(N)\right)\nonumber
\end{align}
and re-arranging yields the desired bound.
\end{proof}
To understand this bound, let us make a specific choice for $c$ and $\delta$, assuming that $\epsilon$ is very small. For example, we can choose $c = 2 h \tau/\pi$ and $\delta=\sqrt{2\epsilon}$, so that $\epsilon/(1-\cos(\delta))\approx 1/2$ and we find
\begin{align}
	N_{c,\delta} \geq \frac{\sqrt{N}}{4} \left[\frac{\sqrt{2\epsilon}}{\tau s} + K \log^{2D}(N)\right]^{-1}. 
\end{align}
We see that the number of peaks of width $\sqrt{2\epsilon}$ such that each of them contains weight at least $(\pi/2)/(h \tau N)$ is essentially lower bounded by $\mc O(\sqrt{N}/\log^{2D}(N))$.

The result holds for any value of $\tau$, but if $\tau$ scales very quickly with $N$, this result loses its predictive power.  For $\tau=\poly(N)$, one still finds a total weight of $1/\poly(N)$ in each peak, which is sufficient for our arguments on entanglement in the next section.
However, if we consider the usual recurrence time $\tau_R$ for some $\epsilon>0$ in a generic system, which is $\tau_R = \mc O(\exp(\exp(N)))$, our bound trivializes: the r.h.s. becomes negative if we do not choose $c$ doubly-exponentially large in the system size. At the same time, if we choose $c$ doubly-exponentially large, the peaks are only required to contain an doubly-exponentially small amount of weight, which does not yield useful information. 


\section{Bounds on R\'enyi entanglement entropy with $\alpha>1$}\label{sec:entanglement}
We now estimate the entanglement entropy of (approximate) eigenstates of the system. The previous discussion motivates the definition of the following normalized pure states
\begin{align}
\ket{\hat{E}_l} = \frac{1}{\sqrt{p(\Delta_\delta(l))}}\sum_{E_j\in \Delta_\delta(l)} c_j \ket{E_j}.
\end{align}
These are approximate energy eigenstates with energy $\hat E_l = (2\pi l+\alpha(\tau))/\tau$ for which $\delta/\tau$ controls the precision, in the sense that
\begin{align}\label{eq:quasieigenstates}
\norm{H\ket{\hat E_l} - \hat{E_l}\ket{\hat E_l}} 
	\leq \frac \delta \tau
\end{align}
and
\begin{align}\label{eq:quasieigenstate2}
\norm{U_t \ket{\hat E_l} - \e^{-\ii \hat E_l t}\ket{\hat E_l}}\leq \sqrt{2(1-\cos(\delta t/\tau))} \approx \delta \frac{t}{\tau},
\end{align}
where the last approximation holds for $\delta t/\tau \ll 1$. The states $\ket{\hat E_l}$ hence dephase in a time of order $\tau/\delta$, but cannot be distinguished from eigenstates on time-scales much smaller than that. In the limit $\delta/\tau \rightarrow 0$ they converge to actual eigenstates provided that the limit exists, i.e., the interval $\Delta_\delta(l)$ actually contains an eigenstate in this limit.

Theorem \ref{th:nc} implies that the initial state has a fidelity of at least $1/{c N}$ with $N_{c,\delta}$ of the approximate eigenstates. This is in fact enough to bound the R\'enyi entanglement entropy $S_\alpha$ of those approximate eigenstates for every region of the lattice, which is the focus of our next main result. We remind the reader at this point that the R\'enyi entropies are defined as
\begin{align}
	S_\alpha(\rho) = \frac{1}{1-\alpha}\log\left(\Tr[\rho^\alpha]\right).
\end{align}
In the limit $\alpha\rightarrow 1$ they converge to the von~Neumann entropy pointwise and they fulfill $S_\alpha \leq S_\beta$ for $\alpha\geq \beta$. In the limit $\alpha\rightarrow \infty$, one obtains $S_\infty(\rho)=-\log(\norm{\rho})$.

\begin{theorem}
	There exist at least $N_{c,\delta}$ many of the approximate eigenstates $\ket{\hat E_l}$ with the following property: For all $\alpha>1$ and for any subregion $A$ of the lattice, 
	the R\'enyi entanglement entropy is bounded as
	\begin{equation}
		S_{\alpha}(\hat \rho^{(l)}_A) \le \frac{\alpha}{\alpha-1}\left[\log {c N }+|\partial A|\log(\chi)\right],
\end{equation}
	where $N_{c,\delta}$ is bounded as per Theorem \ref{th:nc} and $\hat \rho^{(l)}_A = \Tr_{A^c}[\proj{\hat E_l}]$.
\end{theorem}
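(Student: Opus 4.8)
The plan is to reduce the bound on $S_\alpha$ to a bound on the largest eigenvalue of the reduced state $\hat\rho^{(l)}_A$, and then to extract such a bound from the low-rank structure of $\ket\Psi$ together with its overlap with $\ket{\hat E_l}$. First I would establish the purely algebraic inequality $S_\alpha(\rho)\le \frac{\alpha}{\alpha-1}S_\infty(\rho)$, valid for any state and any $\alpha>1$: writing $\rho=\sum_i\lambda_i\proj{i}$ one has $\Tr[\rho^\alpha]=\sum_i\lambda_i^\alpha \ge \norm{\rho}^\alpha$, and since $\tfrac{1}{1-\alpha}<0$ this gives $S_\alpha(\rho)=\tfrac{1}{1-\alpha}\log\Tr[\rho^\alpha]\le \tfrac{\alpha}{\alpha-1}(-\log\norm{\rho})=\tfrac{\alpha}{\alpha-1}S_\infty(\rho)$. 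Hence it suffices to prove $\norm{\hat\rho^{(l)}_A}\ge (cN\,\chi^{|\partial A|})^{-1}$ for the $N_{c,\delta}$ values of $l$ singled out by Theorem~\ref{th:nc}, namely those with $p(\Delta_\delta(l))>1/(cN)$, since then $S_\infty(\hat\rho^{(l)}_A)=-\log\norm{\hat\rho^{(l)}_A}\le \log(cN)+|\partial A|\log\chi$.

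Next I would quantify the overlap of the approximate eigenstate with the initial state. From the definition of $\ket{\hat E_l}$ as the normalized projection of $\ket\Psi$ onto the energy window $\Delta_\delta(l)$, a direct computation gives $\bracket{\Psi}{\hat E_l}=\sqrt{p(\Delta_\delta(l))}$, so the squared fidelity of the two global pure states equals $p(\Delta_\delta(l))$. Because the partial trace over $A^c$ is a CPTP map and the Uhlmann fidelity $\mathrm F(\rho,\sigma)=(\Tr\sqrt{\sqrt\rho\,\sigma\,\sqrt\rho})^2$ is monotone under CPTP maps, this overlap passes to the reduced states:
\[
p(\Delta_\delta(l)) = |\bracket{\Psi}{\hat E_l}|^2 \le \mathrm F(\sigma_A,\hat\rho^{(l)}_A),\qquad \sigma_A=\Tr_{A^c}\proj\Psi .
\]

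The heart of the argument is then to convert this fidelity bound into a bound on $\norm{\hat\rho^{(l)}_A}$ by invoking assumption~\ref{ass:entanglement}, i.e. $\rank(\sigma_A)\le \chi^{|\partial A|}=:r$. Setting $M=\sqrt{\sigma_A}\,\hat\rho^{(l)}_A\,\sqrt{\sigma_A}\ge 0$, the matrix $M$ has at most $r$ nonzero eigenvalues $\mu_1,\dots,\mu_r$, so Cauchy--Schwarz gives $\mathrm F(\sigma_A,\hat\rho^{(l)}_A)=\big(\sum_i\sqrt{\mu_i}\big)^2\le r\sum_i\mu_i=r\,\Tr[\sigma_A\hat\rho^{(l)}_A]$. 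Finally $\Tr[\sigma_A\hat\rho^{(l)}_A]\le \norm{\hat\rho^{(l)}_A}\,\Tr[\sigma_A]=\norm{\hat\rho^{(l)}_A}$, and chaining these estimates yields $\norm{\hat\rho^{(l)}_A}\ge p(\Delta_\delta(l))/r> 1/(cN\chi^{|\partial A|})$, which is precisely the required operator-norm bound. Substituting into $S_\alpha\le \tfrac{\alpha}{\alpha-1}S_\infty$ completes the proof.

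I expect the main obstacle to be conceptual rather than computational: the overlap $p(\Delta_\delta(l))$ may be as small as $1/(cN)$, so naive state-continuity arguments, which require fidelity close to one, are useless here. The essential point is that a small-but-polynomial overlap, combined with the \emph{bounded rank} of $\sigma_A$, still forces a macroscopically large eigenvalue of $\hat\rho^{(l)}_A$; the Cauchy--Schwarz step over the at-most-$r$ nonzero eigenvalues of $M$ is exactly what transmutes the rank bound into an operator-norm bound, and securing this step (rather than settling for a weaker trace-distance estimate) is the crux of the argument.
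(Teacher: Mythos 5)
Your proposal is correct and takes essentially the same route as the paper's proof: a lower bound $|\braketn{\Psi}{\hat E_l}|^2 = p(\Delta_\delta(l)) > 1/(cN)$ from Theorem~\ref{th:nc}, monotonicity of fidelity under partial trace, conversion of the rank bound $\rank(\sigma_A)\le \chi^{|\partial A|}$ into the operator-norm bound $|\braketn{\Psi}{\hat E_l}|^2 \le \chi^{|\partial A|}\norm{\hat\rho^{(l)}_A}$, and the inequality $S_\alpha \le \frac{\alpha}{\alpha-1}S_\infty$. The only (harmless) differences are that you prove the two auxiliary steps from scratch --- Cauchy--Schwarz on the at-most-$r$ eigenvalues of $\sqrt{\sigma_A}\,\hat\rho^{(l)}_A\sqrt{\sigma_A}$ followed by H\"older, where the paper uses measurement monotonicity with the binary measurement $\{P_\sigma,\id-P_\sigma\}$ and $\Tr[\hat\rho^{(l)}_A P_\sigma]\le \rank(\sigma_A)\norm{\hat\rho^{(l)}_A}$, and a direct eigenvalue proof of $S_\alpha\le\frac{\alpha}{\alpha-1}S_\infty$ where the paper cites Refs.~\cite{Beck1990,wilming2018entanglement}.
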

\begin{proof}
	This result is a slight extension of an argument from Ref.~\cite{wilming2018entanglement}: 
	Since the fidelity $F$ between two quantum states cannot decrease under tracing out sub-systems, we have
	\begin{align}
		|\braketn{\Psi}{\Phi}|^2 &\leq F(\rho_A,\sigma_A)^2,
	\end{align}
	where, $\rho_A = \Tr_{A^c}[\proj{\Phi}]$ is the reduced state of $\Phi$ on $A$ and $\sigma_A$ that of $\ket{\Psi}$.
	The fidelity between two states is smaller than that of the outcome-distributions of any measurement on the states. We can therefore use the binary projective measurement $\{P_\sigma, \mathbf 1 -P_\sigma\}$, with $P_\sigma$ being the projector onto the image of $\sigma_A$, to find
	\begin{align}
	F(\rho_A,\sigma_A)^2&\leq \Tr[\rho_A P_\sigma]\leq \mathrm{rank}(\sigma_A)\norm{\rho_A}\\
&=\mathrm{rank}(\sigma_A)\exp(-S_\infty(\rho_A)).
	\end{align}
	 In Refs.~\cite{Beck1990,wilming2018entanglement} it was further shown that $S_\infty \geq \frac{\alpha-1}{\alpha} S_\alpha$. Using assumption \ref{ass:entanglement}, we thus find in our case
	\begin{align}
		\frac{1}{cN} \leq |\braketn{\Psi}{\hat E_l}|^2 \leq \chi^{|\partial A|}\exp\left(-\frac{\alpha-1}{\alpha}S_\alpha(\hat \rho_A^{(l)})\right)
	\end{align}
	and solving for $S_\alpha(\hat\rho^{(l)}_A)$ yields the desired bound.
\end{proof}
In $D=1$, $|\partial A|$ simply counts the number of connected components of $A$ and for a product state we have $\chi=1$, so that the area law term vanishes.
As long as $ c= \mc O(\mathrm{poly}(N))$, the result then leads to $O (\sqrt{N}/\poly(\log(N)))$ approximate eigenstates with entanglement entropy of order $\mc O(\log(N))$. Since $\tau < c$, this allows for a longer revival time, of up to $\tau = \mc O(\mathrm{poly}(N))$.  

For systems that exhibit perfect revivals, $\epsilon=0$, we can choose $\delta=0$, so that by Eq. \eqref{eq:quasieigenstates} the $\ket{\hat E_l}$ become exact eigenstates with energies in the set $\{(2\pi k + \alpha(\tau))/\tau\}_{k\in \mathbb Z}$. 
\begin{corollary}\label{co:entanglement}
	If $F(\tau)=F(0)$ for some $\tau$, there exists a set of at least $N_{c,0}$ energy eigenstates $\ket{E_l}$ with energies in the set $\{(2\pi k+\alpha(\tau))/\tau\}_{k\in \mathbb Z}$ and such that their entanglement entropy of any region is bounded as 
\begin{equation}
	S_{\alpha}(\rho^{(l)}_A) \le \frac{\alpha}{\alpha-1}\left[\log {c N}+|\partial A|\log(\chi)\right],\quad\alpha>1,
\end{equation}
where
\begin{align}
N_{c,0} \geq  \sqrt{N}\frac{\left(1 - \frac{h \tau}{2\pi c}\right)}{ K\log^{2D}(N)}.
\end{align}
\end{corollary}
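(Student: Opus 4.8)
The plan is to obtain the corollary as the $\epsilon=0$, $\delta\to 0$ specialization of Theorem~\ref{th:nc} and of the preceding R\'enyi bound, the only genuinely new ingredient being that a \emph{perfect} revival pins the energy measure exactly onto the ladder, turning the approximate eigenstates $\ket{\hat E_l}$ into honest eigenstates. First I would establish the support statement. With $\epsilon=0$ the right-hand side of Lemma~\ref{le:recurrence} equals $1$ for every $0<\delta\le\pi$, forcing $\sum_l p(\Delta_\delta(l))=1$ and hence $p(\bigcup_l\overline\Delta_\delta(l))=0$. Letting $\delta\to 0^+$, the sets $\overline\Delta_\delta(l)$ increase to the open gaps between consecutive grid points, so by continuity of the energy measure the whole weight concentrates on the grid $\{(2\pi l+\alpha(\tau))/\tau\}_{l\in\mathbb Z}$. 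Equivalently, $F(\tau)=1$ makes the convex combination $f(\tau)=\sum_j|c_j|^2\e^{-\ii E_j\tau}$ unit-modulus, so the equality case of the triangle inequality forces all populated phases to coincide and every $E_j$ to sit on the ladder. In particular each nonempty interval $\Delta_0(l)$ reduces to a single eigenvalue, and by Eq.~\eqref{eq:quasieigenstates} at $\delta=0$ the state $\ket{\hat E_l}$ is then an \emph{exact} eigenstate, which I rename $\ket{E_l}$, with energy $\hat E_l=(2\pi l+\alpha(\tau))/\tau$.

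Next I would count how many of these grid points carry appreciable weight by taking the same limit in Theorem~\ref{th:nc}. That theorem holds for every $\delta>0$; with $\epsilon=0$ the term $\epsilon/(1-\cos\delta)$ vanishes identically and $\delta/(\tau s)\to 0$ as $\delta\to 0^+$, so its right-hand side tends to $\sqrt N\,(1-h\tau/(2\pi c))/(K\log^{2D}(N))$. On the other hand, because the support is exactly the grid, for all sufficiently small $\delta$ each nonempty $\Delta_\delta(l)$ isolates one grid point with $p(\Delta_\delta(l))=p(\{\hat E_l\})$; hence the count $N_{c,\delta}$ of intervals with weight exceeding $1/(cN)$ is independent of $\delta$ in this regime and equals $N_{c,0}$. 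Combining the two observations yields the stated lower bound on $N_{c,0}$.

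Finally I would transfer the entanglement estimate. For each of the $N_{c,0}$ selected eigenstates the overlap satisfies $|\braketn{\Psi}{E_l}|^2=p(\{\hat E_l\})>1/(cN)$, so the argument of the preceding theorem applies verbatim with $\ket{\hat E_l}$ replaced by the exact eigenstate $\ket{E_l}$: monotonicity of the fidelity under partial trace, the rank assumption~\ref{ass:entanglement}, and the inequality $S_\infty\geq\frac{\alpha-1}{\alpha}S_\alpha$ give $\frac{1}{cN}\leq\chi^{|\partial A|}\exp(-\frac{\alpha-1}{\alpha}S_\alpha(\rho^{(l)}_A))$, and solving for $S_\alpha$ produces the claimed bound for every $\alpha>1$ and every region $A$. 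The only delicate point is the indeterminate $0/0$ form of Theorem~\ref{th:nc} at $\epsilon=\delta=0$; I expect the main (and really only) obstacle to be justifying the interchange of the $\delta\to 0$ limit with the interval count, which is precisely what the exact-support statement of the first step secures.
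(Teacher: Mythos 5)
Your proposal is correct and takes essentially the same route as the paper, which obtains the corollary simply by setting $\epsilon=0$ and $\delta=0$ in Theorem~\ref{th:nc} and the preceding R\'enyi-entropy argument, so that by Eq.~\eqref{eq:quasieigenstates} the $\ket{\hat E_l}$ become exact eigenstates on the energy ladder. Your additional care with the $\delta\to 0^{+}$ limit (exact concentration of the energy measure on the grid, stabilization of $N_{c,\delta}$, and avoiding the $0/0$ form of the bound) merely makes explicit details the paper leaves implicit.
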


This bound on the entropy is consistent with the examples in \cite{Turner2018a,vafek2017entanglement,moudgalya2018entanglement,choi2019emergent,schecter2019weak,Chattopadhyay2019,Iadecola2019kinetic}, which display eigenstates with a $\log(N)$ scaling of the von Neumann entropy (see Appendix \ref{app:example} for a more detailed comparison). 

\section{Bounds on the R\'enyi entanglement entropy with $\alpha \le 1$} \label{sec:entanglement2}

For this range of entropies, let us again consider the case of perfect revivals, with $F(\tau)=F(0)$. In this case, the scar states on which the initial state $\ket{\Psi}$ has support can be exactly represented by polynomial functions of the Hamiltonian, with a low degree. Consider the polynomial
\begin{equation}
\label{exASSP}
K_i(H) = \prod_{j\neq i}\left (\br{\id - \frac{\br{H-E_i}^2}{\br{E_j-E_i}^2}} \right ),
\end{equation}
where the product over $j$ ranges from $1$ to $h N \tau/2\pi$, except for $i$. Note that $K_i(H)\ket{E_i}=\ket{E_i}$ and $K_i(H)\ket{E_j}=0$. Thus, 
\begin{equation}
\label{eq:exproj}
K_i(H)\ket{\Psi}= c_i\ket{E_i}
\end{equation}
which can be interpreted as the statement that the polynomial $K_i(H)$ projects $\ket{\Psi}$ to the state $\ketbra{E_i}{E_i}$. The next result is an immediate consequence of this construction, following \cite{arad2013area}.

\begin{theorem}
\label{th:S0scar}
If $F(\tau)=F(0)$, then for all eigenstates $\ket{E_i}$ with non-zero support on $\ket{\Psi}$, the R\'enyi-$0$ entanglement entropy of sufficiently regular regions is bounded as
$$S_{0}(\hat \rho^{(l)}_A)\leq 7\sqrt{hN \tau |\partial A|} \log{\left(hN^2 \tau d^b\right)} + \vert \partial A \vert \log{\chi}.$$
\end{theorem}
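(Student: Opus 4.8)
The plan is to read $S_{0}$ as the logarithm of the Schmidt rank of $\ket{E_i}$ across the cut $(A,A^c)$, and to bound that Schmidt rank by transporting the low entanglement of $\ket{\Psi}$ through the operator that produces $\ket{E_i}$. Concretely, since $\ket{E_i}=K_i(H)\ket{\Psi}/c_i$ by Eq.~\eqref{eq:exproj}, and since for any operator $M$ and state $\ket{\phi}$ the Schmidt rank of $M\ket{\phi}$ is at most the operator Schmidt rank of $M$ times the Schmidt rank of $\ket{\phi}$, I would first establish
\[
\mathrm{rank}(\hat\rho^{(l)}_A)\le \mathrm{SR}_A\!\big(K_i(H)\big)\cdot \chi^{|\partial A|},
\]
where $\mathrm{SR}_A$ denotes operator Schmidt rank across $\partial A$ and I used assumption~\ref{ass:entanglement}. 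Taking logarithms already isolates the $|\partial A|\log\chi$ term, so the whole problem reduces to bounding $\log \mathrm{SR}_A(K_i(H))$.

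The naive estimate is too weak, and seeing why points to the right tool. Writing $H=H_A+H_{A^c}+H_{\partial}$ with $O(|\partial A|)$ boundary terms and using sub-multiplicativity of $\mathrm{SR}_A$ under products together with sub-additivity under sums, a degree-$k$ polynomial of $H$ obeys $\log\mathrm{SR}_A \lesssim k\,|\partial A|\log(d^b)$, i.e.\ linear in the degree. Since $K_i(H)$ has degree $2(hN\tau/2\pi-1)=O(N\tau)$, this gives a bound linear in $N\tau$, much weaker than the claimed $\sqrt{N\tau}$. The improvement is exactly the quadratic speedup underlying the area-law technology of Ref.~\cite{arad2013area}: because perfect revivals pin the support of $\ket{\Psi}$ to an energy ladder of spacing $2\pi/\tau$ inside a spectrum of width $\|H\|\le hN$, the relative gap separating $E_i$ from the other occupied levels is of order $1/(hN\tau)$, and a Chebyshev polynomial resolves such a gap with degree only $k\sim\sqrt{hN\tau}$ rather than $\sim hN\tau$. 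I would therefore replace the explicit $K_i$ by a Chebyshev-based approximate eigenstate projector of degree $\sim\sqrt{hN\tau}$, whose operator Schmidt rank across $\partial A$ is controlled by $d^{\,O(\sqrt{hN\tau\,|\partial A|})}$; the single logarithm $\log(hN^2\tau d^b)$ then collects the degree, the $\mathrm{poly}(N)$ normalisation factors and the local dimension $d^b$ of a boundary term.

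The step I expect to be the main obstacle is turning this \emph{approximate} projector into a statement about the \emph{exact} R\'enyi-$0$ entropy, since rank is not stable under small perturbations: a Chebyshev polynomial only makes the unwanted amplitudes exponentially small, not zero. The reconciliation I would pursue uses two facts specific to the perfect-revival setting. First, $\ket{E_i}$ genuinely has finite Schmidt rank because it lies \emph{exactly} in the image of the finite-operator-Schmidt-rank map $K_i(H)$ applied to $\ket{\Psi}$; unlike a generic gapped ground state there is no infinite tail of Schmidt coefficients. Second, I would feed the Chebyshev operator into the AGSP-to-rank bound of Ref.~\cite{arad2013area}, whose shrinking factor combined with the ladder gap certifies that the exact support of $\hat\rho^{(l)}_A$ is contained in the low-dimensional subspace picked out by the low-degree approximation. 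Carrying out this balancing — choosing the Chebyshev degree and truncation so that the geometric decay of the Schmidt spectrum, the overlap $|c_i|^2\ge 1/(cN)$ and the finite-rank structure combine to the symmetric exponent $\sqrt{hN\tau|\partial A|}$ rather than the cruder $\sqrt{hN\tau}\,|\partial A|$ — is the delicate part, and is where the constant $7$ and the precise form of the logarithm would be pinned down.
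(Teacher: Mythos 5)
Your opening reduction is exactly right and matches the paper: by Eq.~\eqref{eq:exproj}, $\ket{E_i}=K_i(H)\ket{\Psi}/c_i$, so $S_0(\hat\rho^{(l)}_A)\le \log \sr{K_i(H)} + |\partial A|\log\chi$ using assumption~\ref{ass:entanglement}, and you correctly note that naive submultiplicativity gives only a bound linear in the degree. But the pivot to a Chebyshev-based approximate projector is a genuine gap, and you have in fact put your finger on why: rank is not stable under approximation, and your proposed reconciliation --- that the shrinking factor ``certifies that the exact support of $\hat\rho^{(l)}_A$ is contained in the low-dimensional subspace picked out by the low-degree approximation'' --- is false. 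An approximate projector leaves exponentially small residual amplitudes on the other occupied ladder levels, and an exponentially small component outside the low-rank subspace still contributes full units to the Schmidt rank; this is precisely why the AGSP machinery of Ref.~\cite{arad2013area} yields bounds on entropies $S_\alpha$ with $\alpha$ bounded away from $0$ (via truncation of the Schmidt spectrum), but not on $S_0$. For an \emph{exact} R\'enyi-$0$ statement you need a polynomial that \emph{exactly} annihilates the $O(hN\tau)$ other occupied energies, which forces degree at least the number of those levels --- Chebyshev degree reduction is unavailable in principle here, not merely delicate. (Relatedly, the overlap bound $|c_i|^2\ge 1/(cN)$ you invoke plays no role: the theorem holds for every eigenstate with $c_i\neq 0$, however small, which is exactly what the exact-polynomial route delivers and what distinguishes Theorem~\ref{th:S0scar} from Corollary~\ref{co:entanglement}.)

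The square root in the exponent has a different origin than you guessed: it comes not from reducing the degree but from \emph{sub-linear Schmidt-rank growth of powers of $H$}. The paper keeps the exact operator $K_i(H)$ of Eq.~\eqref{exASSP}, of degree $\ell \le hN\tau/\pi$, and invokes Lemma~\ref{lem:SRbound}: for an $m$-regular bipartition, $\sr{H^{\ell}}\leq \br{2\ell N d^b}^{\frac{2\ell}{m} + 10m|\partial A|}$. This is proved by splitting $H$ into $O(m)$ concentric shell Hamiltonians $H_q$, expanding $H^\ell$ into multinomials each of which (by a counting argument) has Schmidt rank at most $(Nd^b)^{\ell/m}$ across \emph{some} shifted cut $A_q$, and paying a factor $d^{10m|\partial A|}$ to move that cut back to $\partial A$. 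Optimizing the shell width at $m=\sqrt{\ell/|\partial A|}$ balances $2\ell/m$ against $10m|\partial A|$ and produces the symmetric exponent $O(\sqrt{\ell\,|\partial A|})$ directly, giving $\log\sr{K_i}\le 7\sqrt{hN\tau|\partial A|}\log(hN^2\tau d^b)$ with no approximation step at all. Note that this lemma is also indispensable to your own sketch: even granting a degree-$\sqrt{hN\tau}$ Chebyshev operator, the naive boundary-term estimate you set up would give exponent $\sqrt{hN\tau}\,|\partial A|$, not $\sqrt{hN\tau\,|\partial A|}$, so the ``cruder'' form you flag cannot be repaired without exactly this shell-decomposition argument --- at which point the Chebyshev step is both insufficient (for exact rank) and unnecessary.
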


The proof, together with a precise definition of what we mean by sufficiently regular regions, can be found in Appendix \ref{app:upperbound2}. This result also bounds the von~Neumann entropy, since $S_1 \le S_0$. Notice that this result holds for all eigenstates on the equally-spaced ladder, as opposed to only $\mathcal{O}(\sqrt{N})$ of them as in Corollary \ref{co:entanglement}. While it represents a non-trivial bound, much smaller than the  $\mathcal{O}(N)$ expected for most eigenstates, the concrete models in the literature show that this could potentially be improved to $\mathcal{O}(\log N)$.
Indeed, in concrete models, the scar states can usually be written as $\ket{E_i}= (\sum_j S_j)^i\ket{\Phi}$ for some simple state $\ket\Phi$ (such as the ground state), where $S_j$ are single particle operators and $j$ labels the sites of the lattice. These eigenstates $\ket{E_i}$ have equally-spaced energies $E_i=\omega i +E_0$, so that there are at most $hN/\omega$ of them in the spectrum. Writing
\begin{align}
    (\sum_j S_j)^i
    &= \sum_{k=0}^i{i\choose k}(\sum_{j\in A}S_j)^k(\sum_{j\in A^c}S_j)^{i-k},
\end{align}
we find that the Schmidt rank of the operator $(\sum_j S_j)^i$ is at most $\log(i)\leq \log\frac{hN}{\omega}$. Thus, if $\ket{\Phi}$ is low-entangled in the sense of assumption~\ref{ass:entanglement}, as is usually the case, all entanglement entropies of $\ket{E_i}$ are bounded by $\log\frac{hN}{\omega}+\vert\partial A\vert \log{\chi}$. 

\section{Revivals in an observable}\label{sec:observable}
Assuming a revival of the full many-body state is a rather strong condition. Intuitively, it should be possible that physically relevant observables have a revival in terms of their expectation value  $\langle A(t)\rangle\equiv \bra{\Psi(t)}A \ket{\Psi(t)}$ at time $\tau$ and yet the full many-body state has a small overlap with the initial state, $F(\tau)\ll 1$. 
It may therefore be surprising that conclusions similar to those above can be reached when one assumes that the expectation value is periodic,
\begin{equation}
\langle A(t) \rangle=\langle A(t+ \tau) \rangle \,\, \forall t,
\end{equation}
and makes one further assumption on the observable. To state this assumption, let us write
\begin{align}
	\langle A(t) \rangle = \sum_{i,j} c_i c_j^* A_{ij} e^{-i (E_i-E_j)t}=\sum_{\omega} v_\omega e^{-i \omega t}, 
\end{align}
such that $v_\omega=\sum_{E_i-E_j=\omega} c_i c_j^* A_{ij}$. Then for any $\omega'$,
\begin{align}
0&=\lim_{T \rightarrow \infty} \int_0^T \frac{\text{d}t}{T} \left(\langle A(t+\tau ) \rangle-\langle A(t) \rangle\right) e^{i \omega ' t}\nonumber \\&=(e^{-i \tau \omega'}-1) v_{\omega'}.\nonumber 
\end{align}
It follows that either the frequency $\omega'$ does not appear in the dynamics of the expectation value ($v_{\omega'}=0$) or it is of the form $\omega'=2 \pi l /\tau$. For local observables in many-body systems, we expect that in general $A_{ij}\neq 0$ unless $A$ and $H$ share some symmetry. It therefore seems reasonable to assume that generically 
\begin{align}\label{eq:assumption_A}
	v_\omega=0 \implies c_i c^*_j=0, \ \forall \, E_i-E_j=\omega. 
\end{align}
We thus conclude that $c_i c^*_j\neq 0$ only if $E_i-E_j=2\pi l/\tau$ for some integer $l$. This in turn implies $F(0)=F(\tau)$, which is the assumption of Corollary~\ref{co:entanglement}.
We leave it as an open problem to explore the setting of approximate $\epsilon$-revivals of local observables.

\section{Universal constraints on revivals}\label{sec:universal}
In the preceding sections, we have assumed revivals and derived properties of the energy eigenstates from this assumption. 
Before concluding, let us now briefly discuss general constraints for such revivals which apply to any model with a local Hamiltonian. 
It is expected that if revivals of the initial product state exist, their duration (i.e., the time for which $F(t)$ is larger than some constant) must become vanishingly short in the thermodynamic limit (see Fig. \ref{fig:revivals}). 
This is also a feature found in the concrete models (see, for example, Refs.~\cite{schecter2019weak,Chattopadhyay2019}). 
This property does not imply that the duration of revivals for local observables becomes short in the thermodynamic limit, but only the time-interval in which the full many-body state has large overlap with the initial state.

We illustrate behaviour this with two different and general results. The first one shows that the average fidelity over time decays with the system size for any initial state fulfilling assumption \ref{ass:clustering}.
\begin{theorem}\label{th:fidbound}
	Let $\ket{\Psi}$ be a pure state fulfilling assumption \ref{ass:clustering}. Then
	\begin{equation}\label{eq:avgfidelity}
	\int_0^T \frac{\text{d}t}{T} \left\vert F(t) \right \vert^2 \le \frac{5 \pi }{2\sigma T}+K'  \frac{\log^{2D}(N)}{\sqrt{N}}
	\end{equation}	
	where $K'\geq 0$ is a constant independent of system size.
\end{theorem}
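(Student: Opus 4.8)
The plan is to recognize the time-averaged survival probability as a single integral of the \emph{energy-gap distribution} against a sinc kernel, and then to exploit the near-Gaussian shape of the energy distribution guaranteed by assumption \ref{ass:clustering}. First, since $f(-t)=\overline{f(t)}$, the integrand $|F(t)|^2=|f(t)|^2$ is even, so $\frac{1}{T}\int_0^T|F(t)|^2\,\mathrm dt=\frac{1}{2T}\int_{-T}^{T}|f(t)|^2\,\mathrm dt$. Writing $|f(t)|^2=\sum_{j,k}|c_j|^2|c_k|^2\,\e^{-\ii(E_j-E_k)t}=\int \e^{-\ii\omega t}\,\mathrm d\mu(\omega)$, where $\mu$ is the law of the gap $\omega=E_j-E_k$ for two independent draws from the energy distribution $p$, Fubini gives the exact identity $\frac{1}{2T}\int_{-T}^{T}|f|^2\,\mathrm dt=\int \mathrm{sinc}(\omega T)\,\mathrm d\mu(\omega)$ with $\mathrm{sinc}(x)=\sin(x)/x$. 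To fix the target scaling I would first treat the idealized case where $\mu$ is exactly $\mathcal N(0,2\sigma^2)$: using $\sin(\omega T)/\omega=\int_0^T\cos(\omega s)\,\mathrm ds$ together with the Gaussian characteristic function one gets the clean value $\frac{1}{T}\int_0^T \e^{-\sigma^2 s^2}\,\mathrm ds\le \frac{\sqrt\pi}{2\sigma T}$, i.e. precisely the announced $\mathcal O(1/(\sigma T))$ behaviour.

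The second step is to transfer the quantitative content of Lemma \ref{le:upperbound} from $p$ to $\mu$. Since $\mu(I)=\sum_j|c_j|^2\,p(E_j-I)$ and $E_j-I$ is a translate of $I$ of the same width, the Berry--Esseen bound of Lemma \ref{le:upperbound} gives, for every interval of width $w$, $\mu(I)\le \frac{w}{2\sigma}+K\frac{\log^{2D}(N)}{\sqrt N}$, while the atom at the origin (the collision weight $\sum_j|c_j|^4\le \max_j|c_j|^2$) is bounded by $K\frac{\log^{2D}(N)}{\sqrt N}$, again via Lemma \ref{le:upperbound} applied to a vanishing width. These two facts say that $\mu$ has an essentially bounded density $\lesssim 1/\sigma$ up to a floor of order $\log^{2D}(N)/\sqrt N$, and they are exactly what will produce the two terms of the theorem.

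The third step is to bound $\int\mathrm{sinc}(\omega T)\,\mathrm d\mu$. The central lobe $|\omega|\le\pi/T$, on which $0\le\mathrm{sinc}(\omega T)\le 1$, contributes at most $\mu([-\pi/T,\pi/T])\le \frac{\pi}{\sigma T}+K\frac{\log^{2D}(N)}{\sqrt N}$. The tail is where the difficulty lies, and this is the main obstacle: a naive estimate $|\mathrm{sinc}(\omega T)|\le 1/(|\omega|T)$ integrated against the density $\lesssim 1/\sigma$ produces a spurious logarithm $\frac{1}{\sigma T}\log(\sigma T)$, so one must recover the genuine $1/(\sigma T)$ rate by using the \emph{oscillation} of the kernel rather than its absolute value. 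I would group the successive half-period lobes $\omega\in[k\pi/T,(k+1)\pi/T]$, on which $\sin(\omega T)$ has sign $(-1)^k$, and pair each positive lobe with the adjacent negative one; the near-constancy of the $\mu$-density across a pair, controlled by the width-$w$ mass bound of the second step, makes the paired contributions telescope into a convergent alternating sum whose value is again $\mathcal O(1/(\sigma T))$ plus a floor of order $\log^{2D}(N)/\sqrt N$ from the accumulated Berry--Esseen errors. (Equivalently, one may integrate $\int\mathrm{sinc}(\omega T)\,\mathrm d(\mu-\mu_G)$ by parts against the Kolmogorov bound $\|M-M_G\|_\infty\lesssim \log^{2D}(N)/\sqrt N$, but then one must restrict the frequency window and invoke the tail decay of $\mu$, whose support lies in $[-hN,hN]$ with variance $2\sigma^2$, to compensate for the infinite total variation of the sinc kernel.)

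Collecting the central-lobe term, the paired tail term and the floor contributions, and carrying the constants through this last estimate, yields the stated bound $\frac{5\pi}{2\sigma T}+K'\frac{\log^{2D}(N)}{\sqrt N}$. I expect the somewhat large numerical constant $5\pi/2$ to be exactly the overhead of the lobe-pairing and boundary estimates, as opposed to the idealized Gaussian value $\sqrt\pi/2$ identified in the first step; sharpening it is not needed, since only the $1/(\sigma T)$ scaling and the $\log^{2D}(N)/\sqrt N$ floor matter for the conclusion that the duration of revivals must shrink with system size.
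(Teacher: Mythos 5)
Your reduction of the time average to $\int \mathrm{sinc}(\omega T)\,\mathrm d\mu(\omega)$ and your transfer of Lemma~\ref{le:upperbound} to the gap measure $\mu$ (including the atom at the origin) are both correct, but the tail estimate --- the step you yourself flag as the main obstacle --- does not go through as described, and this is a genuine gap. The lobe-pairing argument needs the masses of adjacent half-period lobes to nearly cancel, i.e.\ a \emph{two-sided} regularity statement of the form $|\mu(L_k)-\mu(L_{k+1})|\ll\mu(L_k)$, whereas Lemma~\ref{le:upperbound} supplies only a one-sided upper bound on interval masses. For a finite system $\mu$ is purely atomic, and nothing you have invoked prevents it from loading its mass predominantly on lobes where $\mathrm{sinc}(\omega T)>0$; in that case nothing telescopes and the honest estimate is $\sum_{k\geq 1} k^{-1}\bigl(\tfrac{\pi}{2\sigma T}+K\tfrac{\log^{2D}(N)}{\sqrt N}\bigr)$ over the $\sim hNT/\pi$ lobes inside the spectral support, which reintroduces a factor $\log(hNT)$ multiplying \emph{both} terms --- so even the $\log^{2D}(N)/\sqrt N$ floor of the theorem is not recovered, let alone the $1/(\sigma T)$ rate. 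Your parenthetical alternative fails for the same quantitative reason: integrating by parts against the Kolmogorov bound costs the total variation of $\omega\mapsto\mathrm{sinc}(\omega T)$ over the window, which is itself $\Theta(\log(\Omega T))$, and the Chebyshev tail cut-off cannot shrink $\Omega$ enough to remove this logarithm without also destroying the $1/(\sigma T)$ term. In short, the conditional convergence of the sinc kernel cannot be exploited given only interval-mass upper bounds on $\mu$.

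The paper sidesteps exactly this difficulty by never producing the sinc kernel. Since the integrand $|F(t)|^2$ is nonnegative, the flat window $\mathbf 1_{[0,T]}/T$ is first dominated by the Lorentzian $\tfrac{5}{4}\,\tfrac{T}{T^2+(t-T/2)^2}$ (this is where the constant $5\pi/2$ comes from --- window overhead, not lobe-pairing overhead), so that the kernel in the gap variable becomes $\tfrac{5\pi}{4}\,e^{-T|\omega|}$: nonnegative and geometrically decaying, so no oscillation or cancellation is ever needed. Summing your interval-mass bound over windows of width $1/T$ against the weights $e^{-k}$ converges with no logarithm; this is the content of the cited lemma of Malabarba et al., $\sum_{l,m}|c_l|^2|c_m|^2 e^{-T|E_l-E_m|}\leq 4\,\xi(1/T)$, after which a single application of Berry--Esseen to $\xi(1/T)$ --- essentially your step two --- finishes the proof. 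If you want to keep your autocorrelation framework, the repair is minimal: dominate the sharp time window by any window whose Fourier transform is nonnegative and absolutely integrable (Fej\'er, Gaussian, or the Lorentzian above); your steps one and two then combine exactly as intended, at the cost of an $O(1)$ constant.
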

A revival at a time $\tau$ of fidelity at least $(1-\epsilon)$ for a time interval of length $\tau_{\text{rev}}$ contributes to the LHS of Eq. \eqref{eq:avgfidelity} with $(1-\epsilon)\tau_{\text{rev}}/\tau$, so that
\begin{equation}
\frac{(1-\epsilon)\tau_{\text{rev}}}{\tau} \le \frac{5 \pi }{2\sigma \tau}+K'  \frac{\log^{2D}(N)}{\sqrt{N}}.
\end{equation}
For times $\tau \ge \mathcal{O}(1)$, we see that the RHS goes as $\sim O(1/\sqrt{N})$. This bound then restricts the revivals of high fidelity to either a very short time interval $\tau_{\text{rev}}$ or a very late time $\tau$. In Appendix \ref{app:example} we show that the model from \cite{schecter2019weak} effectively saturates this bound. 

\begin{figure}[t]
    \includegraphics[width=0.8\linewidth]{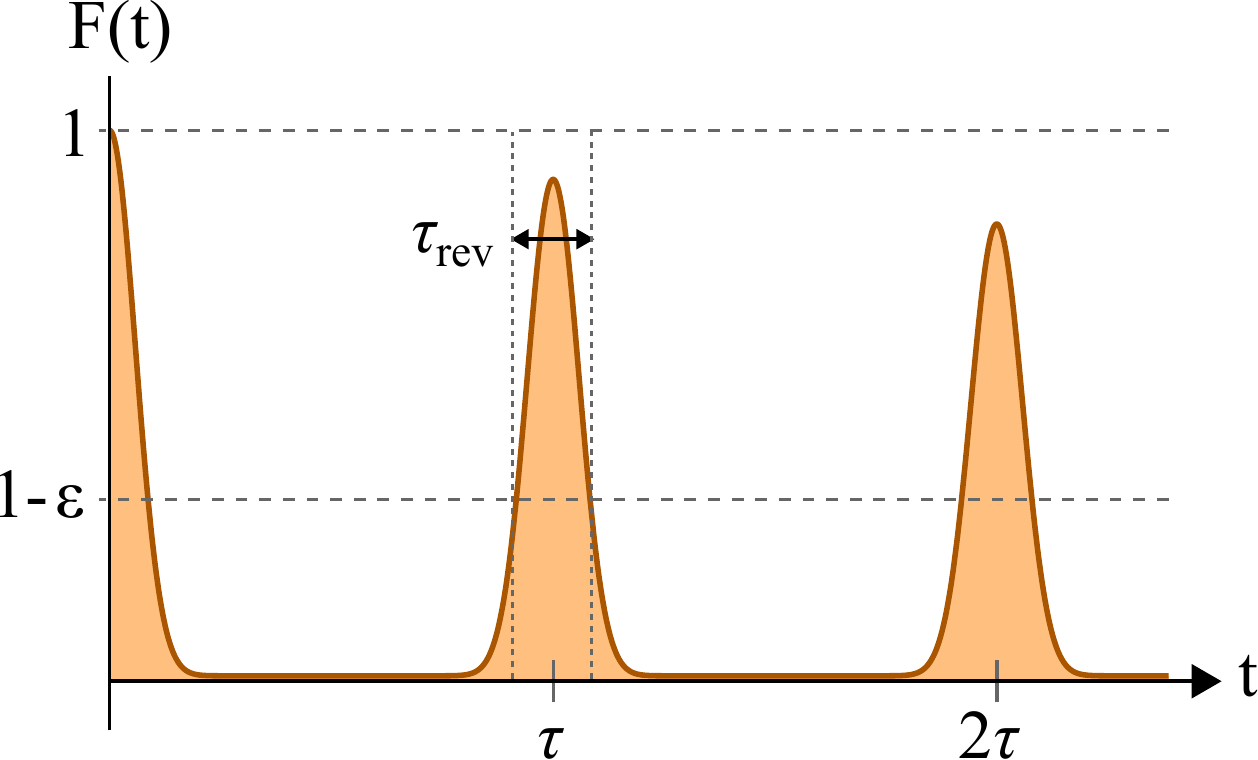}
    \caption{Illustration of the typical behaviour of the fidelity of a system with revivals. According to Theorems~\ref{th:fidbound} and \ref{th:lieb}, the width of the peaks $\tau_{\text{rev}}$ must decrease with system size.}
    \label{fig:revivals}
  \end{figure}

The second result utilizes the Lieb-Robinson bound \cite{lieb1972finite} to show that a short time after an initial product state is prepared (or equally, after a perfect revival), its overlap with the initial state has to be sub-exponentially small in the system size.
For simplicity, we formulate and prove this result only in the case of a $D$-dimensional cubic lattice of side-length $L$ and with a translationally invariant initial state and Hamiltonian. We emphasize, however, that a similar argument applies to any regular $D$-dimensional lattice and also for initial states that are only translationally invariant with a higher period than the lattice spacing.
\begin{theorem} \label{th:lieb}
Consider the translationally invariant initial state $\ket{\Psi}=\ket{\psi}^{\otimes N}$ on the cubic lattice $\Lambda=\mathbb Z_L^D$ evolving under a strictly local, uniformly bounded, and translationally invariant Hamiltonian $H$. 
Define
\begin{align}
    k(t) = -\log\left(\bra{\Psi} \rho_x(t)\otimes\mathbf 1 \ket{\Psi} \right),
\end{align}
where $\rho_x(t) = \Tr_{\{x\}^c}[U_t \proj{\Psi}U_t^\dagger]$ is the reduced density matrix of an arbitrary site $x$ at time $t$. If $\ket{\Psi}$ is not an eigenstate, then for any $\delta>0$ there exists a time $0<\tau<\delta$ such that $k(\tau)>0$. For any such fixed time $\tau$ and for large enough $L$, we have
\begin{align}
    F(\tau)^2 &\leq \mc O\left(\exp\left(-\frac{1}{4}\left[L^D k(\tau)\right]^{1/{1+D}}\right)\right).
\end{align}
\end{theorem}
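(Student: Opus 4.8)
The plan is to prove the existence claim by a short purity argument and then bound $F(\tau)^2$ by packing the lattice with well-separated single sites and invoking the Lieb-Robinson bound to factorize the overlap.

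First I would establish existence of a time $\tau\in(0,\delta)$ with $k(\tau)>0$. Since $\bra{\psi}\rho_x(t)\ket{\psi}\le 1$ we always have $k(t)\ge 0$. If instead $k(t)=0$ on an entire interval $(0,\delta)$, then $\bra{\psi}\rho_x(t)\ket{\psi}=1$ forces $\rho_x(t)=\proj{\psi}$ for this arbitrary (hence, by translation invariance, every) site. A pure global state all of whose single-site marginals are pure must be a full product state, so $U_t\ket{\Psi}=e^{\ii\phi(t)}\ket{\Psi}$ throughout the interval, which makes $\ket{\Psi}$ an eigenstate, contradicting the hypothesis.

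For the main bound I would rewrite $F(\tau)^2=\Tr[\proj{\Psi}\,\rho(\tau)]$ with $\rho(\tau)=U_\tau\proj{\Psi}U_\tau^\dagger$. Choose a set $S\subset\Lambda$ of sites pairwise separated by more than $2R$, and write $P_x=\proj{\psi}\otimes\id$ acting nontrivially only on site $x$. Since $\proj{\Psi}=\bigotimes_x\proj{\psi}\le\prod_{x\in S}P_x$ (each omitted projector is $\le\id$) and $\rho(\tau)\ge 0$,
\[ F(\tau)^2\le\Tr\Big[\prod_{x\in S}P_x\,\rho(\tau)\Big]=\bra{\Psi}\big(\textstyle\prod_{x\in S}P_x(-\tau)\big)\ket{\Psi},\quad P_x(-\tau):=U_\tau^\dagger P_x U_\tau, \]
using $U_\tau^\dagger(\prod_x P_x)U_\tau=\prod_x P_x(-\tau)$. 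Then I invoke Lieb-Robinson: each $P_x(-\tau)$ is $\eta(R)$-close in operator norm to an operator $\tilde P_x$ supported on the ball of radius $R$ about $x$, with $\eta(R)\le c\,e^{-(R-v\tau)/\xi}$. Because the sites in $S$ are $2R$-separated these balls are disjoint, so the product state factorizes, $\bra{\Psi}\prod_{x\in S}\tilde P_x\ket{\Psi}=\prod_{x\in S}\bra{\Psi}\tilde P_x\ket{\Psi}$, and each factor equals $\bra{\psi}\rho_x(\tau)\ket{\psi}+\mc O(\eta(R))=e^{-k(\tau)}+\mc O(\eta(R))$ by translation invariance. A telescoping estimate controls the accumulated replacement error by $\mc O(|S|\,\eta(R))$, yielding
\[ F(\tau)^2\le e^{-|S|\,k(\tau)}+\mc O\big(|S|\,\eta(R)\big). \]

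Finally I would optimize over $R$. A maximal $2R$-separated packing of $\mathbb Z_L^D$ gives $|S|\sim(L/2R)^D$, so the bound reads $e^{-a(L/R)^Dk(\tau)}+\poly(L)\,e^{-R/\xi}$ up to constants. Balancing the two exponents, $a(L/R)^Dk(\tau)\sim R/\xi$, fixes $R\sim(L^Dk(\tau))^{1/(1+D)}$ and makes both exponents scale as $(L^Dk(\tau))^{1/(1+D)}$; the polynomial prefactor is absorbed into the $\mc O(\cdot)$, giving $\exp(-\tfrac14[L^Dk(\tau)]^{1/(1+D)})$ for large enough $L$ (so that $|S|\eta(R)\to0$). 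The hard part will be the Lieb-Robinson step: producing the quasi-local approximation $\tilde P_x$ with a clean operator-norm error bound, controlling the accumulation of $|S|$ such errors inside the product, and verifying that the optimal $R$ indeed yields the exponent $1/(1+D)$ with the claimed constant.
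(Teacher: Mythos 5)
Your proposal is correct and follows essentially the same route as the paper's proof: the paper likewise relaxes the product projector to a sparse sub-lattice of sites separated by $\sim 2l$, uses Lieb--Robinson bounds to replace the Heisenberg-evolved single-site projectors by restricted-evolution (quasi-local) versions supported on disjoint balls so that the product-state expectation factorizes into $e^{-k(\tau)}$ factors with accumulated error $\mc O\left(|A|\, l^{D-1} e^{v_{\LR}\tau - l}\right)$, and then optimizes $l \sim \left(L^D k(\tau)\right)^{1/(1+D)}$ to obtain exactly the claimed exponent. Your existence step is a minor, equally valid variant: the paper first invokes analyticity of $k(t)$ to conclude $k\equiv 0$ everywhere before deducing the eigenstate contradiction, whereas you argue directly on the interval via purity of all single-site marginals.
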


  \begin{figure}[t]
    \includegraphics[width=\linewidth]{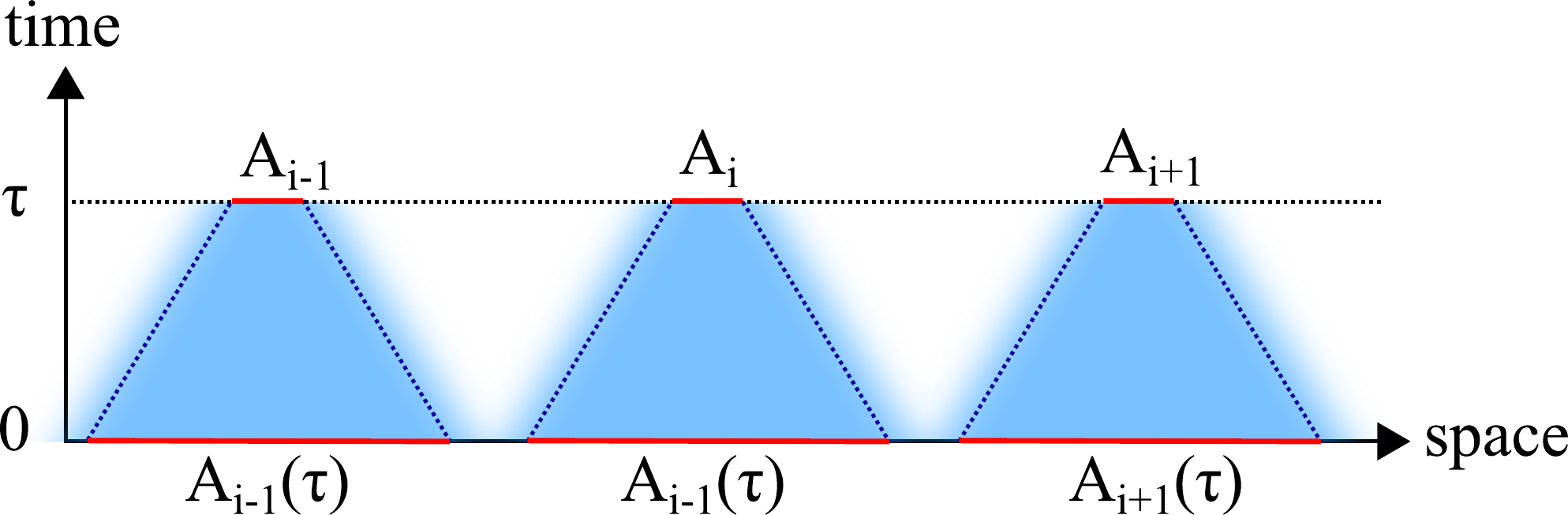}
    \caption{Schematic illustration for the idea behind Theorem~\ref{th:lieb}: An operator located in region $A_i$ at time $t=\tau$ can (approximately) only be influenced by the region $A_i(\tau)$ at time $t=0$ due to the past Lieb-Robinson ``light-cone". If the quantum state of the regions $A_i(\tau)$ factorizes at time $t=0$, it follows that correlation functions between operators supported in the regions $A_i$ (approximately) factorize at time $\tau$. Since translationally invariant product states have exponentially small overlap, one hence expects that the states at times $t=0$ and $t=\tau$ have exponentially small fidelity (as evidenced by concrete models, see Appendix~\ref{app:example}).}
    \label{fig:lieb}
\end{figure}
The theorem says that, whenever $k(\tau)>0$, the fidelity between $\ket{\Psi}$ and the time-evolved state $U_\tau \ket{\Psi}$ is sub-exponentially small in the linear size of the system $L$. Furthermore, from a perturbative expansion one quickly finds that for small $\tau$ we have $k(\tau)=\mc O(\tau^2)$. 
The proof of Theorem~\ref{th:lieb} is relatively involved and presented in Appendix~\ref{sec:app:LR}. 
However, the idea behind it is simple and sketched in Fig. \ref{fig:lieb}. 

\section{Conclusion}
We have derived general results on the energy spectrum and the entanglement of (approximate) energy eigenstates for systems that show revivals. Most importantly, our results show that the presence of ``quantum many-body scars" with small amounts of entanglement of order $\log(N)$ is a necessary consequence of the existence of revivals of a low entangled state with a revival time that is at most $\mc O(\poly(N))$.  This explains why this scaling behaviour has been found in the concrete models studied so far \cite{Turner2018a,vafek2017entanglement,moudgalya2018entanglement,choi2019emergent,schecter2019weak,Chattopadhyay2019,Iadecola2019kinetic}. 
 One drawback of our results is that they only show a $\mathcal{O}(\log (N))$ scaling for R\'enyi entanglement entropies of orders $\alpha >1$, while for smaller $\alpha$ we can only show the upper bound $\mathcal{O}(\sqrt{N | \partial A |})$. 
 
 While it is often found in practice that the von Neumann entropy and the higher order R\'enyi entanglement entropies show a similar scaling behaviour, this is not always the case. In particular our bounds on the R\'enyi entanglement entropies do not guarantee the existence of an efficient description in terms of matrix product states (MPS) \cite{schuch2008entropy} (although a $\mathcal{O}(\log (N))$ bound on the von~Neumann would not imply this either). Indeed, it is known \cite{wilming2018entanglement} that there exist states with both \emph{i)} an arbitrarily large overlap with a product state and \emph{ii)} a volume-law scaling of the von Neumann entropy, while all R\'enyi entropies of order $\alpha>1$ are bounded by a constant (dependend on $\alpha$). 
It is an interesting open problem to find arguments for bounding the R\'enyi entropy with $\alpha<1$ by $\mathcal{O}(\log (N))$, which would guarantee an efficient MPS description of the scar states from dynamical considerations alone. A further interesting open problem is to understand whether the emergent (approximate) SU(2) representations that are connected to quantum many-body scars in concrete models \cite{moudgalya2018entanglement,choi2019emergent,schecter2019weak,Chattopadhyay2019,Iadecola2019kinetic} can be derived from general arguments. 
Finally, it would be interesting to see whether results similar to those in the case of approximate revival of the initial state can also be derived for approximate revivals of (generic) expectation values of observables. This would also be interesting from the point of view of bounding equilibration time-scales in interacting quantum many-body system, a problem where relatively little rigorous progress has been made so far (see, for example, Refs.~\cite{gogolin2016equilibration,Wilming17,deOliveira2018equilibration,Wilming2018equilibration} and references therein for recent discussions of this problem). In particular, it is an interesting open problem whether $\epsilon$-approximate revivals of local observables and at early times are possible in \emph{entanglement-ergodic} systems \cite{wilming2018entanglement}, where all energy eigenstates at positive energy density fulfill weak volumes laws of entanglement.

\emph{Acknowledgements.} The authors acknowledge useful discussions with Cheng-Ju Lin.
H.~W. acknowledges support from the Swiss National Science Foundation through SNSF project
No. 200020\_165843 and through the National Centre of Competence in Research \emph{Quantum Science and Technology} (QSIT). AA is supported by the Canadian Institute for Advanced Research, through funding provided to the Institute for Quantum Computing by the Government of Canada and the Province of Ontario. This research was supported in part by Perimeter Institute for Theoretical Physics. Research at Perimeter Institute is supported in part by the Government of Canada through the Department of Innovation, Science and Economic Development Canada and by the Province of Ontario through the Ministry of Colleges and Universities. Below we report the estimated CO$_2$ emissions for transport  due to this project (template from  \href{https://scientific-conduct.github.io}{scientific-conduct.github.io}).
\begin{center}
\begin{tabular}[b]{l c}
\hline
\textbf{Transport} & \\
\hline
Total CO$_2$-Emission For Transport [$\mathrm{kg}$] & 2910\\
Were The Emissions Offset? & \textbf{Yes}\\
\hline
Total CO$_2$-Emission [$\mathrm{kg}$] & 2910\\
\hline
\hline
\end{tabular}
\end{center}
\bibliography{scars,references}

\widetext

\appendix

\section{Technical proofs of main results}
\subsection{Proof of Inequality~\eqref{eq:revivals}}
\label{sec:revivals}
Here we show inequality~\eqref{eq:revivals}, using the notation $\ket{\Psi(t)}=U_t\ket{\Psi}$. 
First, from the triangle inequality and the unitary invariance of the trace-norm, we find
\begin{align}
    \norm{\proj{\Psi(0)}-\proj{\Psi(m\tau)}}_1 \leq m \norm{\proj{\Psi(0)}-\proj{\Psi(\tau)}}_1. 
\end{align}
We now make use of the Fuchs-van de Graaf inequalities \cite{Fuchs1999}
\begin{align}
    1 - F(\rho,\sigma) \leq \frac{1}{2}\norm{\rho-\sigma}_1 \leq \sqrt{1-F(\rho,\sigma)^2},
\end{align}
where $F(\rho,\sigma) = \norm{\sqrt{\rho}\sqrt{\sigma}}_1$ is the fidelity between two quantum states. In our case we have
\begin{align}
    F(t) = |\braketn{\Psi(t)}{\Psi(0)}| = F(\proj{\Psi(t)},\proj{\Psi(0)}).
\end{align}
Using $F(\tau)^2 =(1-\epsilon)^2 \geq 1 - 2\epsilon$, we thus find
\begin{align}
    1-F(m\tau) \leq m\sqrt{1- F(\tau)^2} \leq m \sqrt{2\epsilon},
\end{align}
which proves the claim.

\subsection{Proof of Lemma \ref{le:upperbound}}
\label{app:upperbound}

We first need the Berry-Esseen theorem for local Hamiltonians from \cite{brandao2015equivalence},  which reads as follows.
\begin{theorem}
(Lemma 8 of \cite{brandao2015equivalence})\label{th:berry}
	Let $\ket{\Psi}$ be a state with a finite correlation length, energy variance $\sigma$, and a local Hamiltonian with uniformly bounded local terms, of a system of $N$ particles on a $D$-dimensional lattice. Given the cumulative function
	\begin{equation}
	J(x)=\sum_{E_i \le x} |c_i|^2
	\end{equation}	
	and the Gaussian cumulative function
	\begin{equation}
	G(x)=\int_{-\infty}^x \frac{\text{d}t}{\sqrt{2\pi \sigma^2}} e^{\frac{-(t-\langle H \rangle)^2}{2 \sigma^2}},
	\end{equation}
	then 
	\begin{equation}
	\sup_x |J(x)-G(x)| \le C \frac{\log^{2D}(N)}{s^3 \sqrt{N}},
	\end{equation}
where $C$ is a constant and $s=\frac{\sigma}{\sqrt{N}}$.
\end{theorem}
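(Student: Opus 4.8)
Since the paper already observes that $f(t)=\bra{\Psi}e^{-\ii tH}\ket{\Psi}$ is precisely the characteristic function of the energy distribution, the natural route is the Fourier-analytic proof of Berry--Esseen built on Esseen's smoothing inequality. The plan is to note that the characteristic function of $J$ is $\hat J(t)=\sum_i|c_i|^2 e^{\ii tE_i}=f(-t)$, while the comparison Gaussian has characteristic function $\hat G(t)=e^{\ii t\langle H\rangle-t^2\sigma^2/2}$. Esseen's inequality then gives, for any window width $T>0$,
\begin{align}
\sup_x|J(x)-G(x)|\le \frac{1}{\pi}\int_{-T}^{T}\left|\frac{\hat J(t)-\hat G(t)}{t}\right|\rmd t+\frac{C_0}{\sigma T},
\end{align}
where the residual uses $\sup_x G'(x)=1/(\sqrt{2\pi}\,\sigma)$ and $C_0$ is a universal constant. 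Everything then reduces to controlling the integrand and optimizing $T$ at the end.

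First I would rewrite the difference through the cumulant generating function $\Lambda(t)=\log\bra{\Psi}e^{\ii t(H-\langle H\rangle)}\ket{\Psi}$, so that $\hat J(t)=e^{\ii t\langle H\rangle+\Lambda(t)}$ and $\hat J(t)-\hat G(t)=\hat G(t)\big(e^{\Lambda(t)+t^2\sigma^2/2}-1\big)$. Taylor expanding, $\Lambda(t)=-\tfrac12\sigma^2 t^2+\sum_{k\ge3}\frac{(\ii t)^k}{k!}\kappa_k$ with $\kappa_k$ the $k$-th cumulant of the energy. The leading correction is the third cumulant: in the region $|t|\lesssim\sigma^{-1}$ one obtains $|\hat J(t)-\hat G(t)|\lesssim |\kappa_3|\,|t|^3 e^{-t^2\sigma^2/2}$, and dividing by $|t|$ and integrating reproduces the classical Berry--Esseen factor $|\kappa_3|/\sigma^3$, while the higher cumulants contribute strictly smaller powers of $1/\sqrt N$.

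The crux, and the main obstacle, is a quantitative control of the cumulants for a finite-correlation-length state, namely an extensivity bound of the form $|\kappa_k|\le N\,(c\,\xi)^k\,k!$ (up to logarithmic corrections), where $\xi$ is the correlation length. This is where the structure of local Hamiltonians enters, and the natural tool is the cluster (Ursell) expansion: $\kappa_k$ is a sum over $k$-tuples of local terms $h_x$ of their connected correlation function, and exponential clustering forces each non-negligible contribution onto a spatially connected cluster, collapsing the position sum from $N^k$ down to $\sim N$ times a convergent geometric series in $\xi$. This immediately yields $\kappa_2=\sigma^2=s^2N$ and $|\kappa_3|=\mathcal O(N)$, so $|\kappa_3|/\sigma^3=\mathcal O(1/(s^3\sqrt N))$, explaining both the $1/\sqrt N$ and the $1/s^3$. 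The logarithmic factors $\log^{2D}(N)$ arise because making the clustering bound quantitative to accuracy polynomial in $1/N$ requires truncating the exponentially decaying correlations at a length scale $\ell\sim\log N$; the resulting volume factors in the $D$-dimensional lattice then compound to the power $\log^{2D}(N)$.

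Finally I would assemble the pieces. The cumulant bound gives $\Lambda(t)$ an $N$-independent radius of analyticity $\sim 1/\xi$, on which $\Re\Lambda(t)\le-\tfrac14\sigma^2 t^2$, so that both $\hat J$ and $\hat G$ are Gaussian-dominated and the integral over the inner window $|t|\lesssim\sigma^{-1}\,\mathrm{polylog}(N)$ is controlled by the third-cumulant estimate above. For the remaining part of the Esseen window one must separately show that $|\hat J(t)|$ stays small, again using the cluster expansion to bound $|\bra{\Psi}e^{\ii tH}\ket{\Psi}|$ away from the Gaussian peak; this, together with the negligibility of $\hat G$ there, keeps the tail of the integral under control. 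Choosing $T$ of order $1/s$ makes the Esseen residual $C_0/(\sigma T)=\mathcal O(1/\sqrt N)$ comparable to the Fourier contribution, giving the claimed bound $\mathcal O(\log^{2D}(N)/(s^3\sqrt N))$. The genuinely hard input is the uniform-in-$k$ cluster expansion with the log-scale truncation; the rest is a standard, if careful, Berry--Esseen computation.
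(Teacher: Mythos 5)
First, a point of order: the paper does not prove this statement at all --- it is imported verbatim as Lemma~8 of Ref.~\cite{brandao2015equivalence}, so the only meaningful comparison is with the proof given there. Your outer skeleton (read $f(t)$ as the characteristic function of the energy measure, compare with the Gaussian characteristic function via Esseen's smoothing inequality, optimize the cutoff $T$) does match the starting point of that proof. But the pillar you rest everything on is a genuine gap: you assume a uniform-in-$k$ cluster-expansion bound $|\kappa_k|\le N(c\xi)^k k!$ as a consequence of finite correlation length. The hypothesis of the lemma (assumption (ii) of the paper) gives exponential decay of \emph{two-point} correlations between bounded observables on separated regions; this does not control $k$-point Ursell functions uniformly in $k$, and no cluster expansion is available at this level of generality --- such expansions need additional structure (e.g., Gibbs states at high temperature or weak coupling), not merely a clustering state. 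This is precisely why Ref.~\cite{brandao2015equivalence} argues differently: the lattice is coarse-grained into blocks of linear size $\mathcal{O}(\xi\log N)$, the assumed two-point clustering (which \emph{does} apply to the bounded block observables $e^{\ii tH_j}$ on disjoint regions) is used to factorize the characteristic function block by block up to $1/\mathrm{poly}(N)$ errors, and an essentially classical Berry--Esseen third-moment estimate is then applied to the nearly independent blocks. A block of volume $\log^D N$ has third absolute moment $\mathcal{O}(h^3\log^{3D}N)$; summing over $N/\log^D N$ blocks and dividing by $\sigma^3=s^3N^{3/2}$ yields exactly the $\log^{2D}(N)/(s^3\sqrt N)$ rate. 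Only the third moment \emph{per block} is ever needed --- never all cumulants of the global distribution.

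There is also an internal inconsistency that signals the problem. If your cumulant bound held as stated, with an $N$-independent analyticity radius on which $\Re\,\Lambda(t)\le-\tfrac14\sigma^2t^2$, then choosing the Esseen cutoff $T$ to be a small constant inside that radius would give residual $C_0/(\sigma T)=\mathcal{O}(1/\sqrt N)$ and an inner integral $\mathcal{O}(|\kappa_3|/\sigma^3)=\mathcal{O}(1/(s^3\sqrt N))$ --- i.e., a \emph{log-free} bound, contradicting your own explanation that the $\log^{2D}(N)$ arises from truncating correlations at scale $\ell\sim\log N$. The truncation you invoke to salvage clustering is, in effect, the blocking construction of Ref.~\cite{brandao2015equivalence}; once it is done honestly, the cumulant series of the full distribution is no longer the object you control, and the ``standard computation'' you defer to is the actual content of the proof. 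Likewise, your tail estimate on $|\hat J(t)|$ over the remainder of the Esseen window is asserted rather than proved, and it too follows from per-block bounds of the form $|f_j(t)|\le e^{-c\sigma_j^2t^2}$ for $|t|$ below a block-dependent scale --- again a blocking statement, not something a global cluster expansion supplies.
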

The proof of Lemma~\ref{le:upperbound} follows straightforwardly from this result.
Let us first recall the definition of $p(\Delta_\delta(l))$ from the main text,
\begin{align}
	p(\Delta_\delta(l)) = \sum_{i:E_i \in \Delta_\delta(l)} |c_i|^2, 
\end{align}
where $\Delta_\delta(l) = \frac{2\pi l+ \alpha(\tau)}{\tau} + \left[-\frac{\delta}{\tau}, \frac{\delta}{\tau}\right]$. Using the notation of Theorem \ref{th:berry}, we can write
\begin{align}
	p(\Delta_\delta(l)) = J\left(\frac{2\pi l+ \alpha(\tau)}{\tau} + \frac{\delta}{\tau}\right)-J\left(\frac{2\pi l+ \alpha(\tau)}{\tau} - \frac{\delta}{\tau}\right).
\end{align}
Now, using the triangle inequality and the Berry-Esseen bound, 
\begin{align}
\vert J(x+y)-J(x)\vert &\le \vert G(x+y)-G(x) \vert + \vert J(x+y)-G(x+y) +G(x)-J(x) \vert \\&\le \vert G(x+y)-G(x) \vert + 2 \sup_x |J(x)-G(x)|
\\&\le \vert G(x+y)-G(x) \vert + K \frac{\log^{2D}(N)}{\sqrt{N}},
\end{align}
where $K \ge 2 C/s^3$.
By definition we have that, for all $x\ge 0$, 
\begin{equation}
    \vert G(x+y)-G(x) \vert = \int_{x}^{x+y} \frac{\text{d}t}{\sqrt{2\pi \sigma^2}} e^{\frac{-(t-\langle H \rangle)^2}{2 \sigma^2}}  \le \int_{-y/2+\langle H \rangle}^{y/2+\langle H \rangle} \frac{\text{d}t}{\sqrt{2\pi \sigma^2}} e^{\frac{-(t-\langle H \rangle)^2}{2 \sigma^2}}= \int_{-y/2}^{y/2} \frac{\text{d}t}{\sqrt{2\pi \sigma^2}} e^{\frac{-t^2}{2 \sigma^2}} \le \frac{ y}{2\sigma},
\end{equation}
where in the last step we have used that 
$\text{Erf}[x]\equiv \frac{1}{\sqrt{\pi}}\int_{-x}^{x}\text{d}t e^{-t^2}\le \sqrt{2} x$.
Setting $y=\frac{2\delta}{\tau}$ completes the proof.

\subsection{Proof of Theorem \ref{th:S0scar}} \label{app:upperbound2}

We start with the assumption of perfect revivals $F(\tau)=F(0)$, which implies that 
\begin{equation}
    \ket{\Psi}=\sum_i c_i \ket{E_i} \,,\, \,\, E_i=\frac{2\pi l+\alpha(t)}{\tau},
\end{equation}
where $l \in \mathbb{Z}$. Since the Hamiltonian is bounded, we know that the number of such levels is at most $\frac{hN \tau}{2\pi}$, so that $\ket{\Psi}$ has support on at most that many levels. Let us now define the following operator $K_i$ for the eigenstate $\ket{E_i}$:
\begin{equation}
\label{eq:exASSP2}
K_i(H) = \prod_{j\neq i}\left (\br{\id - \frac{\br{H-E_i}^2}{\br{E_j-E_i}^2}} \right ).
\end{equation}
This is a polynomial of the Hamiltonian of degree at most $\frac{hN \tau}{\pi}$, and is such that $K_i(H)\ket{\Psi}= c_i \ket{E_i}$.

Now fix a bipartition $A\cup \bar{A}$ of the lattice. The boundary of a set $A$, denoted $\cB(A)$, is the set of local terms $h_x$ which are supported on both $A$ and $\bar{A}$. Assume the following regularity conditions for it, parametrized by an integer $m$. 
\begin{definition}
\label{def:regular}
Let $m$ be an integer. The bipartition $A\cup \bar{A}$ is said to be $m$-regular if the following holds.
\beit
\item There exist $2m$ non-empty concentric sets $A_{-m}, \ldots A_{-2}, A_{-1}, A_1, A_2, \ldots A_m$ satisfying $A \subset A_1 \subset A_2 \subset \ldots A_m$ and $A \supset A_{-1} \supset A_{-2}\supset\ldots A_{-m}$. Let $A_0=A$.
\item The number of spins in $A_m\setminus A$ and $A \setminus A_{-m}$ is at most $10 m |\partial A|$.
\item Every local term $h_x$ belongs to at most one $\cB(A_q)$.
\enit
\end{definition}
\noindent Several natural partitions, such as rectangular, vertical and circular, satisfy these conditions, whenever $|\partial A|= \Omega(m)$ (i.e., for sufficiently large regions). 

Given this definition, we now proceed to bound the Schmidt rank of polynomials of the Hamiltonian, using a result from \cite{arad2013area}.

\begin{lemma}
\label{lem:SRbound}
Fix an integer $m>0$. For any $m$-regular bipartition of the lattice, it holds that $$\sr{H^{\ell}}\leq \br{2\ell N d^b}^{\frac{2\ell}{m} + 10m|\partial A|},$$
\end{lemma}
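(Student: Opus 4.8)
The plan is to bound the operator Schmidt rank across the central cut $\cB(A_0)=\partial A$ by combining three ingredients, following \cite{arad2013area}: the elementary algebra of the Schmidt rank, a pigeonhole argument over the $m$ concentric boundaries supplied by Definition \ref{def:regular}, and the commuting structure of terms that do not straddle a given cut. First I would record the basic properties of $\sr{\cdot}$ across a fixed bipartition: it is subadditive under sums, submultiplicative under products, equal to $1$ for any operator supported entirely on one side of the cut, and at most $d^{b}$ for a single local term $h_x$ that straddles the cut (such a term acts on at most $b$ sites, hence on at most $b/2$ on the smaller side). I would also use the \emph{cut-transport} estimate: if two cuts differ by a region of $t$ sites, their Schmidt ranks differ by a factor of at most $d^{2t}$, obtained by expanding the Schmidt operators of one cut in a complete operator basis on the $t$ intervening sites.

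Next I expand $H^{\ell}=\sum_{y_1,\dots,y_\ell} h_{y_1}\cdots h_{y_\ell}$ and classify each factor by the unique boundary $\cB(A_q)$ it straddles, if any; this assignment is well defined by the third condition of Definition \ref{def:regular}. For a fixed monomial the total number of straddling factors is at most $\ell$, and these are distributed among the $m$ concentric boundaries on each side, so by pigeonhole there is a boundary $\cB(A_{q^{*}})$ straddled at most $2\ell/m$ times. Across that boundary every non-straddling factor has Schmidt rank $1$, so submultiplicativity bounds the Schmidt rank of the monomial across $\cB(A_{q^{*}})$ by $(d^{b})^{2\ell/m}$. Transporting the cut from $\cB(A_{q^{*}})$ back to $\cB(A_0)$ costs a factor $d^{2t}$ with $t\le 10m|\partial A|$, by the second condition of Definition \ref{def:regular}, so each monomial has Schmidt rank at most $(d^{b})^{2\ell/m}\,d^{20m|\partial A|}$ across the target cut.

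The remaining, and hardest, step is to control the \emph{number} of distinct operators that must be summed, since a naive term-by-term use of subadditivity would produce the useless factor $(\text{number of terms})^{\ell}$. Here I would exploit the commuting structure: the terms lying strictly on one side of the chosen cut group into two mutually commuting operators $H_{<q^{*}}$ and $H_{>q^{*}}$, so any maximal block of non-straddling factors collapses, via $(H_{<q^{*}}+H_{>q^{*}})^{p}=\sum_a \binom{p}{a} H_{<q^{*}}^{a} H_{>q^{*}}^{p-a}$, into a sum of only $p+1\le \ell+1$ Schmidt-rank-one operators. Consequently all monomials sharing a fixed ``crossing skeleton'' (the ordered list of at most $2\ell/m$ straddling factors, each one of at most $O(N)$ local terms) merge into a single operator, and the number of such skeletons is at most $\sum_{k\le 2\ell/m}\binom{\ell}{k}(O(N))^{k}$, which is $(2\ell N)^{2\ell/m}$ up to polynomial factors. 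Multiplying the skeleton count, the per-crossing cost $d^{b}$, the collapse factors $\prod_j (p_j+1)$, and the cut-transport factor $d^{20m|\partial A|}\le (2\ell N d^{b})^{10m|\partial A|}$, and absorbing all polynomial prefactors into the base, gives $\sr{H^{\ell}}\le (2\ell N d^{b})^{2\ell/m+10m|\partial A|}$.

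I expect the genuine obstacle to be exactly the compatibility of this collapse with the per-monomial pigeonhole: the optimal boundary $q^{*}$ depends on the monomial, whereas the commuting collapse is naturally organized around a single, fixed cut. The decomposition must therefore be arranged so that both properties hold at once, for instance by partitioning the monomials according to the \emph{first} boundary that is straddled at most $2\ell/m$ times and collapsing each part at its own cut, and then checking that the resulting skeleton count stays sub-exponential in $\ell$ rather than degenerating to the trivial $N^{\ell}$ bound that a single fixed cut would force. Getting this bookkeeping right is what converts the volume-law estimate into the area-law-type bound, and it is the step I would treat most carefully.
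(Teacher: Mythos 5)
Your overall strategy coincides with the paper's, which delegates the combinatorial core to \cite{arad2013area}: pigeonhole over the $m$ concentric boundaries of Definition~\ref{def:regular} to find a cut straddled at most $2\ell/m$ times, pay a Schmidt-rank factor of $d^b$ (times $O(N)$ choices, counted in the skeleton) per crossing, collapse one-sided factors using commutativity of the two sides, and transport the cut back to $\partial A$ at a cost absorbed into $(2\ell N d^b)^{10m|\partial A|}$ using the second regularity condition. All of these individual estimates are correct. The one organizational difference is that the paper first coarse-grains $H$ into the $2m+2$ shell Hamiltonians $H_q$ (terms supported in $A_q\setminus A_{q-1}$ together with those in $\cB(A_{q-1})$), each of Schmidt rank at most $Nd^b$, and invokes the expansion of \cite{arad2013area} for multinomials in these block variables, whereas you expand $H^\ell$ directly into words in the individual terms $h_x$.

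However, the step you yourself flag as the one to "treat most carefully" is a genuine gap, and your proposed fix does not work as stated. If you partition the monomials according to the \emph{first} boundary $\cB(A_{q^*})$ straddled at most $2\ell/m$ times, each class carries the side constraint that every earlier boundary is straddled \emph{more} than $2\ell/m$ times; this constrains the admissible non-crossing fillings, so the class sum with a fixed crossing skeleton is \emph{not} an interleaving of blocks $\left(H_{<q^*}+H_{>q^*}\right)^{p_i}$, and the binomial collapse identity is unavailable. Dropping the constraint to restore the product structure overcounts monomials across classes, so the resulting sum no longer equals $H^\ell$ exactly --- and an exact representation is indispensable, since subadditivity of $\sr{\cdot}$ only applies to a genuine decomposition. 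The resolution, and the reason the paper's shell decomposition is not merely cosmetic, is that by the third regularity condition each local term straddles at most one boundary, so each block $H_q$ straddles only $\cB(A_{q-1})$, and consequently the crossing count at \emph{every} cut depends only on the multiset of letters in a word, not on their order. Grouping words by their exponent vector (the origin of the $\binom{\ell+m}{m}$ factor in the count quoted from \cite{arad2013area}) makes the pigeonholed cut constant on each group; within a group one fixes the at most $2\ell/m$ crossing positions (the $\binom{2\ell}{2\ell/m}$ factor), and the sums over the remaining one-sided letters factor into left-times-right operators of Schmidt rank one precisely because left and right letters mutually commute. To make your per-term version rigorous you would need this type-based grouping (or an equivalent generating-function and interpolation device), not the first-good-cut partition.
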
 
where $\sr{X}$ denotes the Schmidt rank of an operator $X$ with respect to the bipartition.
\begin{proof}
Following the regularity condition (Definition \ref{def:regular}) on $A$, we write $H=\sum_{q=-m}^{m+1} H_q$, where
$$H_q = \sum_{x: \supp(h_x)\subset A_q\setminus A_{q-1}} h_x + \sum_{x\in \cB(A_{q-1})}h_x$$
for $q\in \{-m+1, \ldots m\}$, 
$$H_{-m} = \sum_{x: \supp(h_k) \subset A_{-m}}h_x$$
and
$$H_{m+1} = \sum_{x: \supp(h_x)\in \cL\setminus A_{-m}} h_x.$$
Following \cite{anshu2019entanglement}, we can view $H$ as an effective one dimensional Hamiltonian. It was argued in \cite{arad2013area} that $H^{\ell}$ can be expanded as a linear combination of at most ${\ell+m \choose m}\cdot {2\ell \choose \frac{2\ell}{m}}$ multinomials of $\{H_q\}_{q=-m}^{m+1}$, such that each multinomial has a Schmidt rank at most $\br{Nd^b}^{\frac{\ell}{m}}$ across some bipartition $A_q\cup \bar{A_q}$ (with $-m<q < m$). Since the number of spins between $A_q$ and $A$ is at most $10m|\partial A|$ (recall Definition \ref{def:regular}), each such multinomial has a Schmidt rank of at most $\br{Nd^b}^{\frac{\ell}{m}}\cdot d^{10m|\partial A|}$ across the bipartition $A\cup \bar{A}$. Thus,
$$\sr{H^{\ell}}\leq {\ell+m \choose m}\cdot {2\ell \choose \frac{2\ell}{m}}\cdot \br{Nd^b}^{\frac{\ell}{m}}\cdot d^{10m|\partial A|}\leq \br{2\ell N d^b}^{\frac{2\ell}{m} + 10m|\partial A|}.$$
This completes the proof.
\end{proof}

On the other hand, assumption \ref{ass:entanglement} from the main text implies that there exists a product state $\ket{\psi_A}\otimes \ket{\psi_{A'}}$ and an operator $K_{\Psi}$ of Schmidt rank $\sr{K_{\Psi}}\le \chi^{\vert \partial A \vert }$ such that $K_{\Psi} \ket{\psi_A}\otimes \ket{\psi_{A'}}= \ket{\Psi}$. 

We now proceed to prove Theorem \ref{th:S0scar}. Notice that Eq. \ref{eq:exproj} states that $\ket{E_i}$ can be obtained from $\ket{\Psi}$ by applying a polynomial in $H$ of degree at most $\frac{hN \tau}{\pi}$. Applying Lemma \ref{lem:SRbound} and choosing $\ell= \frac{hN \tau}{\pi}, m=\sqrt{\frac{\ell}{|\partial A|}}$ yields
\begin{equation}
    \log{\sr{K_i}} \le 12 \sqrt{\frac{hN \tau |\partial A|}{\pi}} \log{\left(\frac{2hN^2 \tau d^b}{\pi}\right)}\le 7\sqrt{hN \tau |\partial A|} \log{\left(hN^2 \tau d^b\right)}
\end{equation}
for every $m$-regular partition. 
With this, a bound on the R\'enyi-0 entropy follows
$$\entz{\Tr_{\bar{A}}\br{\ket{E_i}\bra{E_i}}} \leq  \log{\sr{K_i}}  +  \log{\sr{K_{\Psi}}} \le 7\sqrt{hN \tau |\partial A|} \log{\left(hN^2 \tau d^b\right)} + \vert \partial A \vert \log{\chi} $$
This completes the proof.

\subsection{Proof of Theorem \ref{th:fidbound}}

The proof follows an argument first made in \cite{Malabarba14,Garcia-PintosPRX2017}, combined with the Berry-Esseen theorem (see section~\ref{app:upperbound}). 
Since the integrand of the LHS is positive, we have that 
\begin{align}
 \int_0^T \frac{\text{d}t}{T} \left\vert F(t) \right \vert^2 &\le \frac{5}{4} \int_0^T \frac{T\text{d}t}{T^2+(t-\frac{T}{2})^2}  \left\vert F(t) \right \vert^2 \\ \nonumber &=  \sum_{l,m}  |c_l|^2 \vert c_m \vert^2   \frac{5}{4}  \int_0^T \frac{T\text{d}t}{T^2+(t-\frac{T}{2})^2}   e^{-i t ( E_l-E_m ) } \\ \nonumber &\le  \sum_{l,m}  |c_l|^2 \vert c_m \vert^2   \frac{5}{4} \left \vert \int_0^T \frac{T\text{d}t}{T^2+(t-\frac{T}{2})^2}   e^{-i t ( E_l-E_m ) } \right \vert \\
 &=  \frac{5 \pi}{4} \sum_{l , m}  |c_l|^2 \vert c_m \vert^2   e^{- T \vert  E_l-E_m \vert } \nonumber
\end{align}
Appendix B of \cite{Malabarba14} (and also Appendix B of \cite{Garcia-PintosPRX2017}) then shows that
\begin{equation}
 \sum_{l , m}  |c_l|^2 \vert c_m \vert^2   e^{- T \vert  E_l-E_m \vert }   \le 
	4 \max_E \sum_{  E_l \in \{E,E+1/T\}} |c_l|^2  \equiv 4 \xi(1/T).
\end{equation}
Next, we again use the Berry-Esseen theorem (Theorem \ref{th:berry}). 
Note that we can write 
\begin{equation}\label{eq:xi}
  \xi(1/T)=\max_E J(E+1/T)-J(E).  
\end{equation}
 Let us define $E^*$ as the solution of the optimization in Eq. \eqref{eq:xi}. This yields
	\begin{align}
	\big \vert \xi(1/T) - G\left(E^*+\frac{1}{T}\right) +G(E^*) \big \vert &\le 2 \sup_x |J(x)-G(x)|\\& \nonumber \le 2 C \frac{\log^{2D}(N)}{s^3 \sqrt{N}},
	\end{align}
where in the first line we use the triangle inequality, and in the second the Berry-Esseen theorem. Thus
	\begin{align}
\int_0^T \frac{\text{d}t}{T} \left\vert F(t) \right \vert^2 &\le 5\pi \left(G\left(E^*+\frac{1}{T}\right) -G(E^*)\right) + 10\pi C  \frac{\log^{2D}(N)}{s^3\sqrt{N}} \\
&\le 5\pi \text{Erf}[1/(\sqrt{2 \sigma}T) ] + 10\pi C  \frac{\log^{2D}(N)}{s^3\sqrt{N}},
	\end{align}
	where $\text{Erf}[x]\equiv \frac{1}{\sqrt{\pi}}\int_{-x}^{x}\text{d}t e^{-t^2}$.
The result now follows from the fact that $\text{Erf}[x] \le \sqrt{2} x $ and setting $K'\ge 10\pi C/s^3$.

	\subsection{Proof of Theorem \ref{th:lieb}}
\label{sec:app:LR}
The proof of Theorem~\ref{th:lieb} relies on Lieb-Robinson (LR) bounds \cite{lieb1972finite}, which apply under our assumptions on the Hamiltonian stated in the main text. LR bounds can be stated in different ways, see Ref.~\cite{Kliesch2014a} for a review. For our purposes, the following formulation will be most suitable.
To state it, we have to set up some notation. For any region $X$ of the lattice $\Lambda=\mathbb Z_L^D$, we define 
\begin{align}
    H_X &= \sum_{x: \mathrm{supp}(h_x)\subseteq X} h_x
\end{align}
as the sum of Hamiltonian terms supported in the region $X$.
For any region $X$ we define the complement $X^c=\Lambda\setminus X$ and $|X|$ to be the number of lattice sites contained in $X$. For any two regions of the lattice $X,Y$  we denote by $d(X,Y)$ the lattice distance between the regions. 
For any region $X$ we further define $U^X_t$ as the unitary propagator for time $t$ under the Hamiltonian $H_X$. Given our conventions, we thus have $U_t = U^\Lambda_t$. 
Now let $A$ be a local observable. By abuse of notation, we also denote by $A$ its supporting region on the lattice and hence by $|A|$ the corresponding number of sites of the lattice on which it acts. We further define
\begin{align}
    A^X(t) = {U^X_t}^\dagger \hat A U^X_t,\quad A_\Lambda(t) = A(t). 
\end{align}
We are now in position to state the LR bounds that we will use. 
\begin{lemma}[Lieb-Robinson bounds]
Let $A$ be a local observable and $H$  a strictly local, and uniformly bounded Hamiltonian. Then there exist constants $K_{\LR},v_{\LR}\geq 0$ such that for all $X$ with $l:=d(A,X^c)\geq 2D-1$ we have
\begin{align}
    \norm{A^X(t) - A(t)}\leq \norm{A}K_{\LR}  l^{D-1}\e^{v_{\LR} t - l}.
\end{align}
\end{lemma}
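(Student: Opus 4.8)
The plan is to prove the Lieb-Robinson bound as stated, namely controlling the difference between the fully-evolved local observable $A(t)=A_\Lambda(t)$ and the observable $A^X(t)$ evolved only under the restricted Hamiltonian $H_X$, in terms of the distance $l=d(A,X^c)$. The standard route is through a differential inequality for the operator-valued function that interpolates between the two evolutions. First I would define $\Phi(s) = {U^X_s}^\dagger U_s^\dagger \, \hat A \, U_s U^X_s$ (or equivalently interpolate the generator), so that $\Phi(0)=A^X$ and, after running $s$ from $0$ to $t$, the endpoint encodes the discrepancy between the two propagators. Differentiating gives $\frac{\rmd}{\rmd s}\Phi(s) = \ii\, {U^X_s}^\dagger U_s^\dagger\, [H-H_X, \hat A]\, U_s U^X_s$ up to the conjugation bookkeeping, so that the only terms that survive are the Hamiltonian terms $h_x$ straddling the boundary of $X$, i.e.\ those with $\supp(h_x)\not\subseteq X$.

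The key step is then the \emph{Lieb-Robinson recursion}. Integrating the differential identity yields $\norm{A(t)-A^X(t)} \le \int_0^t \rmd s \sum_{x:\,\supp(h_x)\cap X^c\neq\emptyset} \norm{[h_x, A(s)]}$, and each commutator $\norm{[h_x,A(s)]}$ is itself bounded by a Lieb-Robinson quantity at the shorter distance $d(A,\supp(h_x))$. Because $H$ is strictly local and uniformly bounded (bond dimension $d$, interaction range $b$, $\norm{h_x}\le h$), the number of boundary terms at distance $r$ from $A$ grows only polynomially, $\mc O(|A|\, r^{D-1})$, while the commutator strength must travel the remaining distance. Iterating this bound produces a series in powers of $t$ whose $k$-th term counts directed paths of length $k$ on the lattice from $A$ to $X^c$; each path contributes a factor $\sim (2h s)$ per step and a lattice-combinatorial factor. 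Resumming the series gives the exponential $\e^{v_{\LR}t}$ with $v_{\LR}$ proportional to $h$ times the lattice coordination number, together with the spatial decay factor $\e^{-l}$, and the prefactor $K_{\LR}\,l^{D-1}$ absorbs the polynomial surface-area growth of the boundary. The condition $l\ge 2D-1$ is exactly what is needed for the geometric counting of self-avoiding-type paths to be dominated by a convergent geometric series.

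The main obstacle is the combinatorial bookkeeping in the resummation: one must show that the sum over all interaction paths of a given length from the support of $A$ to the region $X^c$ is controlled by $\norm{A}\,K_{\LR}\,l^{D-1}\e^{v_{\LR}t-l}$ with constants independent of $L$ and of the particular region $X$. Concretely, after writing the iterated integral as $\sum_k \frac{(C t)^k}{k!}\cdot(\text{\#paths of length }k)$, the nontrivial estimate is that the number of lattice paths of length $k$ starting near $A$ and reaching distance $\ge l$ is bounded so that the series, when truncated at the minimal path length $k\ge l$, yields the stated $\e^{-l}$ decay after absorbing the factorials into the exponential. Rather than reprove this from scratch, I would invoke the standard Lieb-Robinson machinery (see the review \cite{Kliesch2014a} and the original \cite{lieb1972finite}), since our Hamiltonian satisfies the strict-locality and uniform-boundedness hypotheses required; the only task specific to our formulation is to rephrase the conclusion in terms of the restricted propagator $U^X_t$ and the distance $l=d(A,X^c)$, and to verify the polynomial surface-area prefactor $l^{D-1}$ appropriate to a $D$-dimensional cubic lattice. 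This last point follows because the boundary of the light-cone reaching $X^c$ from $A$ has cross-sectional area scaling as $l^{D-1}$.
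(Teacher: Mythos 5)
The paper gives no proof of this lemma at all: it is quoted as a standard result, with \cite{lieb1972finite} and the review \cite{Kliesch2014a} cited for the formulation, which is exactly what you do after sketching the textbook interpolation-and-iteration derivation, so your proposal takes essentially the same approach as the paper. Your sketch contains only cosmetic slips --- the interpolant as you wrote it satisfies $\Phi(0)=A$ rather than $A^X$ (the standard choice is $s\mapsto U_s^\dagger\,{U^X_{t-s}}^{\dagger} A\, U^X_{t-s}\, U_s$), and the terms of $H-H_X$ supported entirely in $X^c$ drop out only when one commutes with $A^X(s)$, which is supported in $X$, rather than with $A(s)$ --- but since the heavy combinatorial resummation is in any case delegated to the cited references, these do not affect correctness.
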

The Lieb-Robinson bounds tell us that we can approximate the time evolution of a local observable $A$ by time evolution constrained to a neighbourhood $X$ around it as long as the distance $l$ from $A$ to the complement of $X$ is much larger than $v_{\LR}t$. 
In turn this implies that regions on a lattice that are a distance $l$ apart cannot build up significant correlations within a time much smaller than $l/v_{\LR}$.

For now, we will keep the proof slightly more general than the statement in the theorem and consider an initial state $\ket{\Psi} = \otimes_x \ket{\psi_x}$ that need not be translationally invariant. We then specialize to the latter case towards the end of the proof. In the following we write $\ket{\Psi(\tau)}=U_\tau \ket{\Psi}$. 
First, using that $\ket{\Psi}=\ket{\Psi(0)}$ is a product state, we find for any region $A$:
\begin{align}
\vert \braketn{\Psi(\tau)}{\Psi(0)} \vert^2
= \bra{\Psi(\tau)} \otimes_{x\in\Lambda}\proj{\psi_x}\ket{\Psi(\tau)} 
\leq \bra{\Psi(\tau)} \otimes_{x\in A} \proj{\psi_x} \otimes \mathbf 1_{A^c} \ket{\Psi(\tau)}.
\end{align}
Viewing $A_x=\proj{\psi_x}\otimes \mathbf 1_{\Lambda\setminus \{x\}}$ as a local observable supported on site $x\in A$ and $A=\otimes_{x\in A}\proj{\psi_x}\otimes\mathbf 1_{A^c}$ as one supported on region $A$, we can then make use of the Heisenberg picture to get
\begin{align}
    \vert \braketn{\Psi(\tau)}{\Psi(0)} \vert^2 \leq \bra{\Psi(0)} A(\tau) \ket{\Psi(0)}.
\end{align}
We now fix the region $A$ to consist of a sub-lattice of sites, all of which are a distance $2(l+1)+r$ apart from each other, where $r$ is the maximum diameter of the support size of the Hamiltonian terms:
\begin{align}
        r = \max_{x\in\Lambda}|\mathrm{diam}(\mathrm{supp}(h_x))|.
\end{align}
The distance $l$ will be fixed later. We define $B_x(l)$ to be an $l$-neighbourhood of $x$,
\begin{align}
    B_x(l) = \{ y\in \Lambda\ \vert\ d(x,y)\leq l\},
\end{align}
and set $X = \cup_{x\in A} B_x(l-1)$. With this choice we have $d(A,X^c)=l$ and
\begin{align}
    U^X_t = \prod_{x\in A} V^x(t),
\end{align}
where $V^x(t)$ is only supported within $B_x(l-1)$, which implies
\begin{align}
    A^X(t) = \prod_{x\in A} A^x(t).
\end{align}
Using the LR-bounds and that $\ket{\Psi(0)}$  is a product state, we then find
\begin{align}
    \bra{\Psi(0)} A(\tau) \ket{\Psi(0)} \leq \bra{\Psi(0)} A^X(\tau) \ket{\Psi(0)} + K_{\LR}  l^{D-1}\e^{v_{\LR} \tau - l} = \prod_{x\in A} \bra{\Psi(0)}A_x^X(\tau)\ket{\Psi(0)}+ K_{\LR}  l^{D-1}\e^{v_{\LR} \tau - l}.
\end{align}
We can now make use of the LR-bounds again to approximate each factor:
\begin{align}
    \bra{\Psi(0)}A^X_x(\tau)\ket{\Psi(0)} \leq \bra{\Psi(0)}A_x(\tau)\ket{\Psi(0)} + K_{\LR}  l^{D-1}\e^{v_{\LR} \tau - l} = \bra{\Psi(\tau)}A_x\ket{\Psi(\tau)}+ K_{\LR}  l^{D-1}\e^{v_{\LR} \tau - l}. 
\end{align}
Using that $\norm{A(t)}=\norm{A^X(t)}=\norm{A_x^X(t)} = \norm{A_x(t)}=1$, we can then bound
\begin{align}
\bra{\Psi(0)} A(\tau) \ket{\Psi(0)} \leq \prod_{x\in A} \bra{\Psi(0)}A_x(\tau)\ket{\Psi(0)}+ 2|A|K_{\LR}  l^{D-1}\e^{v_{\LR} \tau - l}.
\end{align}
However, we also have
\begin{align}
    \bra{\Psi(0)}A_x(\tau)\ket{\Psi(0)} = \bra{\Psi(\tau)} \proj{\psi_x}\otimes \mathbf 1_{\{x\}^c}\ket{\Psi(\tau)} = \Tr\left[\rho_x(\tau) \proj{\psi_x}\right] =: \exp(-k_x(\tau)). 
\end{align}
Putting the bounds together and using the assumption
\begin{align}
    k(\tau) = \min_{x\in \Lambda} k_x(\tau) > 0,
\end{align}
we thus find
\begin{align}
    \vert \braketn{\Psi(\tau)}{\Psi(\tau)} \vert^2 \leq \exp(-k(\tau)|A|) + 2|A|K_{\LR}  l^{D-1}\exp(v_{\LR} \tau - l).
\end{align}

We now choose $l=\left(L^D k(\tau)\right)^{1/(1+D)}$. For large enough $L$, we then find
\begin{align}
    |A| &\leq \frac{1}{2}\left(\frac{L}{l}\right)^D = \frac{1}{2}\left(\frac{L^D}{k(\tau)}\right)^{\frac{D}{1+D}},\\
    |A| &\geq \frac{1}{4}\left(\frac{L^D}{k(\tau)}\right)^{\frac{D}{1+D}}.
\end{align}
This leads to
\begin{align}
    \vert \braketn{\Psi(\tau)}{\Psi(0)}\vert^2 &\leq \left[1 +  K_{\LR}\left(\frac{L^D}{k(\tau)}\right)^{\frac{D}{1+D}}\left(L^D k(\tau)\right)^{D-1/(D+1)}\e^{v_{\LR}\tau}\right] \exp\left(-\frac{1}{4}\left(L^D k(\tau)\right)^{1/(1+D)}\right)\\
    &= \left[1 +  K_{\LR}\left(\frac{N^{2D-1}}{k(\tau)}\right)^{\frac{1}{D+1}} \right] \exp\left(-\frac{1}{4}\left(N k(\tau)\right)^{1/(1+D)}\right)\\
    &= \mc O\left(\exp\left(-\frac{1}{4}\left(N k(\tau)\right)^{1/(1+D)}\right)\right).
\end{align}
What is left is to show that for any $\delta>0$ there exists a $\tau<\delta$ such that $k(\tau)>0$. To do this, we now make use of translational invariance, so that $k(\tau)=k_x(\tau)$ for any $x\in\Lambda$. Suppose now contrarily that there exists a $\delta>0$ and $k(\tau)=0$ for all $\tau<\delta$. This means that $k(\tau)$ is constant over an open interval. But since on any finite system $k(\tau)$ is an analytic function, it then has to be constant. This in turn implies that
\begin{align}
    \rho_x(\tau) = \proj{\psi}_x
\end{align}
for all $\tau$, which implies that the initial state is an eigenstate. This finishes the proof of Theorem~\ref{th:lieb}.
We emphasize that we only used translational invariance of the initial state to argue that $k(\tau)>0$. It should be clear from the argument given above that it can be generalized to situations where, for example, the initial state is translationally invariant with a higher period, or is only a product state after neighboring spins are blocked together.

\section{Comparing the bounds with previous results}\label{app:example}

To illustrate the tightness of our bounds, we compare our results with those of a recently found model with quantum scars and perfect revivals in Ref.~\cite{schecter2019weak}. The model is the spin-1 XY model in a hypercubic lattice with $N=L^D$ particles and Hamiltonian
\begin{equation}
    H=\sum_{\langle ij \rangle} (S^{x}_iS^x_j+S^{y}_i S^y_j) + h \sum_i S^z_i + D \sum_i (S^z_i)^2,
\end{equation}
where $S^{\alpha}_i$ are the spin-1 operators at site $i$. In Ref.~\cite{schecter2019weak} it was found that this Hamiltonian has $N+1$ eigenstates $\ket{\mathcal{S}_n}$ with $n \in \{0,...,N\}$ which form a representation of $\text{SU}(2)$ and have equally spaced energies $E_n=h(2n-N)+ND$. Moreover, there exists a particular product state $\ket{\Psi_0}=\bigotimes_i \ket{\psi_i}$, the so-called ``nematic N\'eel" state, which is such that
\begin{equation}
    \vert F(t) \vert^2=\vert \bra{\Psi_0} e^{-iHt} \ket{\Psi_0}\vert^2 =\cos^{2N}(ht),
\end{equation}
that is, it exhibits perfect revivals at periods of $\pi / h$, with a weight suppressed exponentially with the system size. This initial product state can be written as
\begin{equation}
    \ket{\Psi_0}=\sum_{n=0}^N c_n \ket{\mathcal{S}_n}, \,\, c_n^2=\frac{1}{2^N}\binom{N}{n},
\end{equation}
so that $\langle H \rangle = ND$ and $\sigma=h \sqrt{N}$, which thus fulfills assumptions \ref{ass:entanglement} and \ref{ass:clustering} from the main text.
One can easily calculate that for any $T=\pi l /h$ we have
\begin{equation}
	\int_0^T \frac{\text{d}t}{T}\vert F(t) \vert^2= \frac{\left (N -\frac{1}{2}\right)!}{\sqrt{\pi} N!}=(\pi N)^{-1/2}+\mathcal{O}(N^{-3/2}),
\end{equation}
which shows that the scaling of Theorem \ref{th:fidbound} is close to optimal.

For a bi-partition of the lattice $N=N_A+N_B$ with $N_A\le N_B$, the scar eigenstates have a Schmidt decomposition $\ket{\mathcal{S}_n}=\sum_{k=0}^K \sqrt{\lambda_k^{(n)}}\ket{i_{k,A}^{(n)}}\otimes \ket{i_{k,B}^{(n)}}$ where $K=\max \{n, N_A \}$. The coefficients are calculated to be
\begin{equation}
    \lambda^{(n)}_k=\frac{\binom{N_A}{k} \binom{N_B}{n-k}}{\binom{N}{n}}.
\end{equation}
Let us choose $N_A=N_B=N/2$. The R\'enyi-$\infty$ entropy can now be easily obtained by noting that for all $n$, the largest Schmidt coefficient is given by
\begin{equation}
    \lambda^{(n)}_{\text{max}}=\frac{\binom{N/2}{n/2}^2 }{\binom{N}{n}}.
\end{equation}
Let us now do the change of variables $n=b N$, so that $b$ is a $\mathcal{O}(1)$ number for the $\mathcal{O}(N)$ eigenstates in the bulk of the spectrum. For large $N$, using Stirling's approximation, we find that
\begin{align}
\binom{N/2}{bN/2}^2&\simeq \frac{1}{\pi N b(1-b)}\left(\frac{N}{2} \frac{1}{\left(\frac{bN}{2}\right)^{b}\left(\frac{(1-b)N}{2}\right)^{(1-b)} } \right)^N \\
\binom{N}{bN}&\simeq \frac{1}{\sqrt{2 \pi b (1-b) N}} \left(N \frac{1}{\left(bN\right)^{b}\left((1-b)N\right)^{(1-b)} } \right)^N,
\end{align}
which leads to
\begin{equation}
\lambda_{\max}^{(n)}\simeq \sqrt{\frac{2}{\pi b(1-b) N}},
\end{equation}
and therefore
\begin{equation}
S_\infty \equiv -\log \lambda_{\max}^{(n)} \simeq \frac{1}{2}\log N + \frac{1}{2}\log (\pi b(1-b)/2).
\end{equation}
This, together with the inequalities for R\'enyi entropies $S_\alpha \geq S_\infty$ for all $\alpha\geq 0$ and
\begin{equation}
\frac{\alpha-1}{\alpha} S_\alpha    \le S_{\infty} \,\,\,\, \forall \alpha \ge 1,
\end{equation}
implies that all the R\'enyi entropies with $\alpha>1$ of the $\mathcal{O}(N)$ eigenstates also scale logarithmically, and that the R\'enyi entropies for $0 \le \alpha \le 1$ scale at least logarithmically. Similar conclusions can likely be reached with further models in the literature such as \cite{choi2019emergent,Chattopadhyay2019}. In contrast, Corollary \ref{co:entanglement} only guarantees the existence of $\mathcal{O}(\sqrt{N}/\log^{2D}{(N)})$ eigenstates in which the R\'enyi entropies with $\alpha>1$ scale at most logarithmically.

In addition, in \cite{schecter2019weak} it is shown that at least one eigenstate in the middle of the spectrum has von Neumann entanglement entropy scaling as $S_1 \simeq \mathcal{O}\left(\log N \right)$, which suggests that Theorem \ref{th:S0scar} is not tight.

\end{document}